\documentclass[11pt]{article}
\usepackage[noheader]{styles/redhat_arxiv}
\shadelogopath{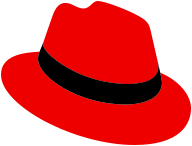}
\tcbset{shade titlebox/.append style={bottom=1.35cm}}
\newcommand{\arxivversion}{1}
\newcommand{\serveguardopenscience}{content/open_science_arxiv}
\usepackage{adjustbox}

\usepackage{enumitem}
\setlist[description]{style=nextline,font=\normalfont}

\ifdefined\arxivversion\else
  \documentclass[11pt]{article}
  \usepackage[margin=1in]{geometry}
\fi
\usepackage{amsmath,amssymb,amsthm}
\usepackage{mathtools}
\usepackage{booktabs}
\usepackage{array}
\usepackage{microtype}
\usepackage{xcolor}
\usepackage{tikz}
\usetikzlibrary{arrows.meta, positioning}
\usepackage{styles/pgfplots_academic}  %
\pgfplotscreateplotcyclelist{oi8}{
  {OIblue,  solid,          mark=*,         mark size=1.8pt, mark options={solid}},
  {OIorange,densely dashed, mark=square*,   mark size=1.7pt, mark options={solid}},
  {OIsky,   dashdotdotted,  mark=triangle*, mark size=2pt,   mark options={solid}},
  {OIgreen, densely dotted, mark=diamond*,  mark size=1.9pt, mark options={solid}},
  {OIverm,  loosely dashed, mark=otimes*,   mark size=1.9pt, mark options={solid}},
  {OIpurp,  dashdotted,     mark=pentagon*, mark size=2pt,   mark options={solid}},
  {OIgrey,  densely dashdotted, mark=x,     mark size=2pt,   mark options={solid}},
  {OIyellow,loosely dotted, mark=star,      mark size=2pt,   mark options={solid}},
}
\usepackage[sort&compress,numbers]{natbib}
\usepackage[hidelinks]{hyperref}
\usepackage{subcaption}
\usepackage[capitalize]{cleveref}

\newtheorem{theorem}{Theorem}
\newtheorem{lemma}{Lemma}

\newtheorem{proposition}{Proposition}
\newtheorem{definition}{Definition}

\newcommand{\Real}{\mathbb{R}}
\newcommand{\Fp}{\mathbb{F}_p}
\providecommand{\rank}{\operatorname{rank}}
\DeclareMathOperator{\img}{im}
\DeclareMathOperator{\Inert}{Inert}
\newcommand{\Mon}{M}              %
\newcommand{\Val}{V}              %
\newcommand{\Valbase}{V_{0}}      %
\newcommand{\Valupd}{\Delta}      %
\newcommand{\Valsrv}{V_{\mathrm{srv}}}  %
\newcommand{\pip}{\pi}            %
\newcommand{\dmodel}{d_{\mathrm{model}}}
\newcommand{\dhead}{d}            %
\newcommand{\Nbasis}{N}           %
\newcommand{\commit}{c}           %
\newcommand{\Pivis}{\Pi_{\mathrm{vis}}}   %
\newcommand{\eps}{\varepsilon}
\newcommand{\Mdep}{M_{\mathrm{dep}}}      %

\newcommand{\gptDmodel}{768}
\newcommand{\gptDhead}{64}
\newcommand{\gptBlindDim}{640}

\newcommand{\gptSeparationOrders}{7.7}
\newcommand{\gptDenseSecret}{49152}
\newcommand{\gptLoraSecretRankEight}{6656}

\newcommand{\dfourTrainedBlind}{88}
\newcommand{\dfourRandomBlind}{91}

\newcommand{\multiModelsCost}{4}
\newcommand{\multiTrainedBlindLo}{83}
\newcommand{\multiTrainedBlindHi}{88}
\newcommand{\gqaValueOutDim}{128}

\newcommand{\wvSigmaMax}{8.0}
\newcommand{\gainResidFactor}{13}
\newcommand{\gainVsQk}{5}
\newcommand{\qkFlagshipResid}{6.1}
\newcommand{\qkKerpiRank}{640}
\newcommand{\qkAugDirs}{160}
\newcommand{\qkAugResid}{1.8}

\newcommand{\zkProveHead}{6.4}
\newcommand{\zkProveLayer}{6.4}
\newcommand{\zkVerifyHeadMs}{95}
\newcommand{\zkVerifyMs}{96}
\newcommand{\zkProofHeadKB}{5.5}
\newcommand{\zkProofLayerKB}{5.5}

\newcommand{\poneCheapRankLo}{4}
\newcommand{\poneCheapRecoveryLo}{100}
\newcommand{\poneCheapParamRatio}{11.6}

\newcommand{\garakGlitchTokens}{141}
\newcommand{\garakGlitchRank}{104}
\newcommand{\garakGlitchCleanResid}{5.3\times10^{-7}}
\newcommand{\garakGlitchBackdoorResid}{2.3}
\newcommand{\garakGlitchSepOrders}{6.6}

\newcommand{\garakInjBlindFrac}{0.90}

\newcommand{\garakInjSepOrders}{6.3}

\newcommand{\quantIntEightHeadroom}{2.0}
\newcommand{\quantIntFourMarginLo}{1.5}

\newcommand{\behavShiftRatio}{67}
\newcommand{\garakBaseGlitchFail}{95}
\newcommand{\garakBaseGlitchProbes}{256}

\newcommand{\driftObvsBlindResid}{2.2}

\newcommand{\qwenInjSepOrders}{7.3}

\newcommand{\smollmInjSepOrders}{6.9}

\newcommand{\covMhaOverlap}{1.00}

\newcommand{\covGqaBlindShared}{384}
\newcommand{\behavSeeds}{30}
\newcommand{\behavRatioMean}{74}
\newcommand{\behavRatioStd}{14}
\newcommand{\steerStealthConcealed}{88}

\newcommand{\steerEvadeVisibility}{0.54}
\newcommand{\steerNaiveVisibility}{0.26}
\newcommand{\steerZkSepOrders}{4.4}

\newcommand{\detAucStealth}{0.65}
\newcommand{\detAucUnit}{0.96}
\newcommand{\advResCapLo}{0.5}
\newcommand{\advResCapHi}{1.1}

\newcommand{\realAdapterBlindFrac}{87}

\newcommand{\realAdapterSepOrders}{6.2}

\newcommand{\intManipVsBase}{31}
\newcommand{\intTamperResid}{0.98}
\newcommand{\intConfinedResid}{1.4\times10^{-6}}

\newcommand{\zkTypedProveLayer}{6.4}
\newcommand{\zkTypedVerifyMs}{96}
\newcommand{\zkTypedProofKB}{5.5}
\newcommand{\zkSaltCommitProve}{48}
\newcommand{\zkSaltCommitVerifyMs}{347}
\newcommand{\zkSaltCommitProofKB}{6.0}
\newcommand{\zkTypedTamperY}{99.7}
\newcommand{\preprocGptMs}{86}
\newcommand{\preprocQwenFiveMs}{31}
\newcommand{\preprocSmolMs}{1688}

\newcommand{\frontierModels}{8}
\newcommand{\frontierGptKtenMed}{533}
\newcommand{\frontierGptKhundredReach}{10}
\newcommand{\frontierGptKappaPercent}{83}

\newcommand{\frontierSmolKappaPercent}{79}

\newcommand{\frontierQwenOneFiveKappaPercent}{20}

\newcommand{\frontierQwenThreeKappaPercent}{14}

\newcommand{\frontierLlamaOneKappaPercent}{27}

\newcommand{\frontierLlamaThreeKappaPercent}{36}

\newcommand{\utilSeeds}{3}
\newcommand{\doeRecovGain}{99}

\newcommand{\doeRecovRand}{98}
\newcommand{\doeRecovStdMax}{2}

\newcommand{\doeResidQkaug}{43}
\newcommand{\doeResidRand}{89}
\newcommand{\doeResidRandDeep}{60}

\newcommand{\doeEtaUtil}{0.2}
\newcommand{\doeEtaResid}{85}

\newcommand{\dollyRecovGain}{98}
\newcommand{\dollyRecovRand}{99}

\newcommand{\dollyResidGain}{0}
\newcommand{\dollyResidRand}{89}
\newcommand{\dollyResidQkaug}{43}
\newcommand{\dollyEtaUtil}{12}
\newcommand{\dollyEtaResid}{81}

\newcommand{\nwCellBits}{28}
\newcommand{\nwBoundAr}{66}

\newcommand{\nwBoundCm}{65}

\newcommand{\nwBoundDiff}{66}
\newcommand{\nwMarginWorst}{186}
\newcommand{\compGapMaxPct}{0.25}
\newcommand{\compGapMedPct}{0.16}
\newcommand{\guardAdmitMs}{31}
\newcommand{\guardOpenS}{1.2}
\newcommand{\sidecarKB}{18}

\newcommand{\pkgRootLayers}{12}
\newcommand{\pkgCommitProveS}{555}
\newcommand{\pkgCommitVerifyS}{4.1}
\newcommand{\pkgRootProofKB}{70}
\newcommand{\pkgPlainProveS}{76}
\newcommand{\pkgRootCostMs}{0.03}

\newcommand{\doeMatchedMaxDiff}{0.13}

\newcommand{\doeMatchedPairs}{18}
\newcommand{\doeTostMargin}{1}

\newcommand{\serveGuardSepOrders}{6}

\newcommand{\zkProbes}{10}
\newcommand{\zkProbeSetBits}{13}
\newcommand{\zkSoundnessBits}{130}

\newcommand{\knowChanSelectivity}{7}

\ifdefined\arxivversion
\title{ServeGuard: Verifiable, Bounded-Residual Confinement of
Operator-Invisible Channels Without Revealing the Certified Read Factor}
\author{%
  \begin{minipage}[t]{0.48\linewidth}
    \raggedright
    Dominik Dahlem\textsuperscript{*}\\
    {\normalsize Red Hat AI}\\
    {\normalsize\href{mailto:ddahlem@redhat.com}{\texttt{ddahlem@redhat.com}}}
  \end{minipage}\hfill
  \begin{minipage}[t]{0.48\linewidth}
    \raggedright
    Rui Vieira\\
    {\normalsize Red Hat AI}\\
    {\normalsize\href{mailto:rui@redhat.com}{\texttt{rui@redhat.com}}}
  \end{minipage}\\[0.25em]
  {\footnotesize\textsuperscript{*}Corresponding author.}}
\shadeaffiliation{}
\hypersetup{pdfauthor={Dominik Dahlem, Rui Vieira}}

\else
\title{\bfseries ServeGuard:\\
Verifiable, Bounded-Residual Confinement of\\
Operator-Invisible Channels Without Revealing the Certified Read Factor}
\author{SHADE Program \\ (author list withheld for review)}
\fi
\date{}

\begin{document}
\maketitle

\begin{abstract}
Third-party adapters for open-weight language models ship as opaque weight matrices; a
recipient cannot check whether an adapter hides a backdoor without trusting the publisher or
inspecting the weights, the publisher's core asset.
For one important class (payloads placed where a safety monitor is structurally blind), detection
is unsound as a defense: every detector that factors through the declared monitor is invariant
on its blind subspace, and honest and backdoored adapters overlap on every blind-subspace
statistic we evaluate, because benign adaptation uses that subspace too.
Rather than detect this channel, we make it structurally \emph{absent} and prove that we did.
The publisher builds the adapter to read the input only through directions the monitor covers
and proves this in zero knowledge, revealing nothing about the read factor it certifies. The certificate is cheap
because the expensive part, identifying the monitor's blind spot, is a deterministic function
of the \emph{public} base model, so only one linear identity is proved; the served residual is
the base model's own public floor, not a prover-chosen tolerance.
The result is \emph{ServeGuard}, a supply-chain primitive: the publisher ships a
\emph{proof-carrying adapter} whose proof lets a consumer or regulator verify, without the
certified read factor and without trusting the publisher, that the adapter carries no hidden
channel of this class relative to the declared monitor; an admission-time typing guard binds
the guarantee to the adapter bytes admitted at serving time.
Across eight checkpoints up to 7B from four families, the monitoring budget is
architectural: the measured frontier saturates at the value-path rank on grouped-query
checkpoints but not on multi-head ones. On a 0.5B model confinement is nearly free for benign adaptation, making
monitor quality the security lever.

\end{abstract}

\section{Introduction}
\label{sec:intro}

Third-party model components ship as weights, and the supply chain that moves
them (fine-tuning services, model hubs, vendor releases, and the regulators who must
vet them) has no way to check what those weights carry. A fine-tuning service returns an
adapter; a model hub serves a checkpoint; a vendor delivers a quantized release. The
recipient runs the artifact inside a safety pipeline and wants assurance that it carries
no hidden backdoor; the recipient's own stakeholders (consumers, regulators, auditors)
may require that assurance without direct access to the model. Three options exist today, and each fails. Benchmark scores say nothing about a payload that activates only on a trigger.
Full weight disclosure exposes the publisher's intellectual property, and even then a
weight matrix is not human-interpretable the way source code is. Third-party audit under
non-disclosure relocates trust to the auditor rather than removing it. The recipient is
left to trust the publisher.

\paragraph{What can hide.} We study one precisely characterized channel class.
In a transformer attention layer, each token's representation passes through three
learned projections: the \emph{query} and \emph{key} maps, which determine which tokens
attend to which, and the \emph{value} map, which determines what information is read.
Deployed safety monitors read the internal state through a fixed linear feature map: a
probing classifier scores the residual stream through learned
weights~\cite{belinkov2022probing}, a representation-engineering defense watches the
projection onto a declared refusal or harmfulness
direction~\cite{zou2023repeng,arditi2024refusal}, and an attention-pattern monitor such as
Attention Tracker~\cite{hung2025attentiontracker} post-processes the query/key features of
selected heads. The post-processing may be nonlinear; what the certificate is defined
against is the linear feature map, which is \emph{public} once the operator declares what
it watches, and a payload in the feature map's kernel leaves every downstream statistic
unchanged. A
monitor that inspects only the query/key
side has a \emph{blind subspace}: perturbations it maps to zero and therefore cannot
distinguish from no perturbation at all. Such a perturbation is invisible to every test
that queries only the monitor's output, at every payload strength, yet the value path
reads it. The channel is invisible \emph{by construction}, not by stealth, and it can be
shipped as a LoRA adapter (a small, low-rank weight update added to a pretrained model).
The security-relevant property is thus not the harm a particular payload causes but the
\emph{invisibility}: no detector that factors through the declared monitor can observe a
kernel payload, and blind-space occupancy is not a sound reference-free discriminator of
benign from payload-carrying adapters (\cref{sec:eval}), so the principled defense is to
make the channel structurally absent, not to bound the payload in hand.

\paragraph{Why detection fails.} The first instinct is to detect the channel from the
released weights. It is uninformative. The monitor's blind subspace occupies most of the
residual stream (the hidden-state vector passed between layers; $\gptBlindDim$ of $\gptDmodel$ dimensions on GPT-2~\cite{gpt2}), so any low-rank
adapter places the bulk of its computation there by geometry alone: a trained honest
adapter and a backdoored one are indistinguishable on every kernel-activity statistic we
evaluate (\cref{sec:eval}). This is not an artifact of a
weak test: backdoors can be planted so as to be provably undetectable from the weights
under explicit hardness assumptions~\cite{goldwasser2022backdoors,choudhary2026sparseback},
with latent-space
constructions resisting the known defenses~\cite{eggen2026latentback}; absent a declared
reference direction, weight-space detection has no sound target.
Crucially, the no-go forbids \emph{distinguishing} a backdoored model from a clean one
without a reference; it does not forbid \emph{certifying a structural relation relative to
a public, declared monitor}. That distinction is the opening.

\paragraph{The pivot: build it out and prove it.} Rather than try to detect the channel,
the publisher \emph{builds the adapter so the channel cannot exist} and proves it. A LoRA
adapter is a product of two small matrices: a \emph{read factor} that selects which
directions of the input the adapter accesses and a \emph{write factor} that produces the
output. If the read factor is constrained to access the input only through what the
monitor exposes, then no input in the monitor's blind subspace can reach the adapter at
all, and the channel is absent by construction, regardless of what the publisher intended.

This is the model-world analogue of proof-carrying code~\cite{necula1997pcc}: a property
you build in and then prove you followed. The identity is \emph{exact}, certifying that the
adapter adds no blind-subspace read whatsoever rather than less than a threshold, so the
served model's residual exposure is the base model's own public floor and not a
prover-chosen tolerance. It is proved over fixed-point weights with the typed factors
constructed on that encoding lattice (\cref{sec:realization}), so quantization adds no
slack, and it is cheap for a structural reason: the expensive part, identifying the blind
subspace, is a deterministic function of the \emph{public} base model, so it happens in the
clear and the publisher proves one identity that is linear in the committed secret against a
public matrix. The consumer
checks static per-layer certificates once rather than monitoring every inference.

\paragraph{Contributions.}
\begin{itemize}
\item \textbf{A typed confinement defense, certified and cost-matched}
(\cref{sec:defense,sec:certificate,sec:soundness}): rather than \emph{test} a released
adapter, the publisher \emph{builds} a monitor-typed adapter whose read factor factors
through the public monitor, certified by one linear identity in zero knowledge. The served
residual is pinned \emph{exactly} at the public base floor (\cref{lem:served-floor}). The deployed monitor augments the
operator's QK map with the value path's top-gain singular directions (the spectral optimum,
\cref{thm:augment}), whose reach across architectures we \emph{measure} (\cref{sec:eval}).
\item \textbf{Security evaluation and utility characterization}
(\cref{sec:eval}): an adaptive adversary, a targeted integrity manipulation, a
cross-architecture geometry sweep (every layer and query head of \frontierModels\ checkpoints
spanning four families up to 7B), and a factorial utility study over monitor designs, budgets, ranks, and seeds on two
instruction tasks. Confinement costs little and the cost separates cleanly from the security
benefit: utility recovers to $\doeRecovRand$--$\doeRecovGain\%$ under \emph{every} monitor
while the certified residual varies by orders of magnitude, driven to zero only by a monitor
designed for the value path's gain structure ($\eta^2$: $\doeEtaUtil\%$ vs
$\doeEtaResid\%$).
\item \textbf{A measured zero-knowledge implementation}
(\cref{sec:realization,sec:eval}): a Halo2 prover certifying value adapters in
$\zkTypedProveLayer$--$\zkSaltCommitProve$\,s with
$\zkTypedVerifyMs$--$\zkSaltCommitVerifyMs$\,ms verification and
$\zkTypedProofKB$--$\zkSaltCommitProofKB$\,KB proofs (plain typed relation to one-time
salted commit-and-prove binding), at algebraic false-accept probability
$\approx 2^{-\zkSoundnessBits}$ per certificate for a fixed commitment (the end-to-end
bound adds the binding, knowledge-soundness, and grinding terms, \cref{thm:protocol}), a
whole-checkpoint \emph{package
root} binding every layer's certificate, and an admission-time typing guard that ties the
admitted read factor to the certified one.
\end{itemize}
Together these deliver a \emph{proof-carrying adapter}: a new supply-chain primitive that
turns publisher trust into a verifiable structural guarantee a recipient checks without seeing the certified read factor. We call the
end-to-end system \emph{ServeGuard}, after the serving-time guard that binds the guarantee to
the adapter bytes admitted at serving time rather than only the bytes attested
(\cref{thm:endtoend} states the composition).
The certificate is \emph{relative to a declared public monitor}: it certifies the absence of
this linear-carrier channel class in the released update, against a committed checkpoint, for
the subspace that monitor covers: a payload riding a direction the monitor does not probe is
outside it (\cref{sec:gapcorollary}), and whether that monitor is the safety-relevant object is a
declared premise, not a proved one. Separating honest adapters from backdoors is the
frontier it opens (\cref{sec:scope}).

\section{Detection from weights, and defense by construction}
\label{sec:threat}

The introduction described the supply-chain trust gap informally. This section makes it
precise: the roles, the channel class, why detection cannot separate honest from
malicious, and what defense replaces it.

\paragraph{Roles.} A \emph{prover} (the publisher) holds an adapter $\Valupd = BA$,
commits to its read factor $A$ and type witness $C$ under a binding, hiding commitment
$\commit$, and proves a structural property of the adapter. The hiding profile is fixed
throughout: $A$, $C$, and the commitment salt stay secret; the write factor $B$ ships in the
clear with the package and is hash-bound, not hidden; the proved statement is
$A = C\Mdep$. A \emph{verifier} (consumer, hub, or
regulator) holds $\commit$ and the public statement, and either the public base model or the published
monitor matrix $\Mdep$ together with an authenticated manifest binding it to the base, and
accepts or rejects. The
\emph{adversary} is the publisher itself, shipping a covert operator-invisible channel while
passing benchmarks~\cite{badnets2017}, an untrusted fine-tuning service that injects a payload
before the adapter is committed, or a distribution intermediary that substitutes a package
(caught by the package binding of \cref{sec:realization}). A compromised \emph{serving
endpoint} is out of scope: the certificate binds the committed checkpoint, not what an
endpoint loads at inference (\cref{sec:scope}). The headline deployment is a third-party LoRA/PEFT
adapter on a \emph{public} base. By \emph{public} we mean an open-weight model whose
parameters are available to all parties (e.g.\ Llama, Qwen, GPT-2 on a model hub): the
base weights and therefore any monitor derived from them (such as the QK map $\pip$ or its
augmented variant $\Mdep$) are public and fixed, and the only untrusted object is the
adapter.
Certifying the untrusted
adapter against a trusted public base is the natural trust boundary of this deployment.
In practice, a \emph{platform} (a model hub or inference service) may provide the proving
infrastructure, with the deployer's own stakeholders verifying without weight access
(\cref{fig:pipeline}); \cref{sec:scope} states what changes when the base is not public.

\paragraph{The confinement attestation problem.}
Given a public base model with value map $\Valbase$ and a public linear monitor
$\Mon$, and an adapter $\Valupd = BA$ whose read factor is committed under a binding, hiding
$\commit$ and whose write factor $B$ ships in the clear with the package
(\cref{sec:realization}): construct a proof system in which the publisher convinces the
verifier that $\Valupd$ adds no component readable only in the monitor's blind subspace
$\ker\Mon$ (i.e.\ $\ker\Mon\subseteq\ker\Valupd$), such that completeness and soundness
hold and the transcript reveals nothing about $(A,C)$ beyond $\commit$. The served model's residual blind
response should equal the public base contribution $\Valbase$ restricted to $\ker\Mon$,
computable by the verifier in the clear.

\paragraph{The released model: base plus update.} The served \emph{value map} (the linear
projection $W_V$ inside each attention layer, which determines what information the layer
extracts from the residual stream) decomposes as
$\Valsrv = \Valbase + \Valupd$: a \emph{base} value map $\Valbase$, a deterministic public
function of the released \emph{public} base model, plus a committed
\emph{secret update} $\Valupd$ (the shipped adapter; for a rank-$r$
LoRA~\cite{lora2021}, a low-rank factorization $\Valupd = BA$, with $\commit$ binding
the read factor $A$ and the package root binding $B$ together with $\commit$,
\cref{sec:realization}). The attack surface is the update: an operator-invisible channel is what the publisher
\emph{adds} through $\Valupd$. Because the base model is public, its blind read
$\Valbase\Nbasis$ against any public basis $\Nbasis$ of $\ker\pip$ is computable by every
verifier in the clear: a property of an artifact everyone holds, not a concealed
insertion. The certificate therefore attests the update's contribution $\Valupd\Nbasis$,
pinning the served residual blind read at the public base floor $\Valbase\Nbasis$
(\cref{sec:certificate}), which a low-rank update cannot drive to zero
(\cref{prop:served}(b)).

\paragraph{Who computes the monitor.} Where the monitor is computed sets the trust model.
We target the \emph{public-base, secret-adapter} case: the base is public, so the verifier
computes the deployed monitor $\Mdep$ (the QK map plus its optimal in-kernel augmentation,
\cref{sec:defense}) from the public base, with no access to the secret adapter and no choice
left to the publisher, so it cannot be gamed to hide structure. The alternatives are weaker
or costlier: a public \emph{fixed} monitor (the base QK map) is verifier-computable but
value-path-misaligned; a \emph{publisher-proposed} monitor would itself need a ZK proof of
its relation to the committed weights; a \emph{trusted-auditor or TEE} monitor needs weight
access.

\section{The defense: confinement to the visible channel}
\label{sec:defense}

The previous section established that detection from weights is uninformative and that
the defense must be structural. This section gives the construction: confine the adapter
to the monitor's visible channel so no channel in the deployed monitor's blind subspace
$\ker\Mdep$ can exist, and choose the monitor to minimize the residual cost (\cref{fig:pipeline}).

\begin{figure}[t]
\centering
\ifdefined\arxivversion\begin{adjustbox}{max width=\linewidth}\fi
\begin{tikzpicture}[>=Stealth,
  op/.style={draw, rounded corners, align=center, font=\scriptsize, inner sep=3.5pt,
             minimum height=9.5mm, minimum width=24mm},
  pub/.style={op, fill=black!4},
  lbl/.style={font=\tiny, fill=white, inner sep=1.5pt, align=center}]
\node[pub] (base) {Open-weight\\base checkpoint};
\node[op, right=13mm of base] (cm) {$\mathsf{CompileMonitor}$};
\node[op, right=13mm of cm] (tt) {$\mathsf{TypeTrain}$};
\node[op, right=13mm of tt] (cp) {$\mathsf{CommitCert}$};
\node[op, right=13mm of cp] (pkg) {$\mathsf{Package}$\\$+\,\mathsf{Prove}$};
\node[op, below=9mm of pkg] (vb) {$\mathsf{VerifyBundle}$};
\node[op, left=18mm of vb] (sg) {$\mathsf{ServeGuard}$};
\node[op, dashed, left=18mm of sg] (ik) {inference\\kernel};
\node[font=\tiny\itshape, above=1.5mm of tt] {deployer / proving platform};
\node[font=\tiny\itshape, below=1.5mm of sg] {stakeholder / weight-holding consumer};
\draw[->, thick] (base) -- (cm);
\draw[->, thick] (cm) -- node[lbl, above=0.4mm]{$\Mdep,\ \mu$} (tt);
\draw[->, thick] (tt) -- node[lbl, above=0.4mm]{$(B,A,C)$} (cp);
\draw[->, thick] (cp) -- node[lbl, above=0.4mm]{$\commit$} (pkg);
\draw[->, thick] (pkg) -- node[lbl, right=1.5pt]{$(h_{\mathrm{pkg}},\mathcal P)$} (vb);
\draw[->, thick] (vb) -- node[lbl, above=0.4mm]{verified package} (sg);
\draw[->, thick, dashed] (sg) -- node[lbl, above=0.4mm]{admitted tensor} (ik);
\coordinate (elb) at ([yshift=-4mm]cm.south);
\draw[->, thick] (cm.south) -- (elb) -- node[lbl, below=0.5mm]{recompute $\Mdep$}
  (sg.north |- elb) -- (sg.north);
\end{tikzpicture}
\ifdefined\arxivversion\end{adjustbox}\fi
\caption{The ServeGuard pipeline. The deployer or its proving platform compiles the public
monitor $\Mdep$ from the open base, trains monitor-typed factors, commits to the certified
read factor and type witness, computes the package root $h_{\mathrm{pkg}}$ binding the
public write factor to $\commit$, proves the typed identity against a statement carrying
that root, and ships the package. A stakeholder verifies without the certified read
factor; the consumer's admission-time guard recomputes $\Mdep$ and admits only bytes opening
$\commit$ as the typed identity. The dashed edge is the external serving-integrity obligation
(\cref{thm:endtoend}).}
\label{fig:pipeline}
\end{figure}

The defense confines the value path to the public monitor's visible channel
(\cref{fig:pipeline}). Its objects form one canonical setup, stated once and used by
every section that follows.

\begin{definition}[The concealment setup]
\label{def:setup}
On the residual stream $X = \Real^{\dmodel}$ (the hidden-state vector passed between
transformer layers), a \emph{monitor} is a public linear map $\Mon\colon X \to Y$ into an
observation space $Y$; its \emph{blind subspace} $\ker\Mon$ is the set of perturbations
the monitor maps to zero. A \emph{value map} $\Val$ (here, the value projection an adapter
ships) carries the channel: the pair $(\Mon,\Val)$ is \emph{live} iff some payload is
concealed ($\Mon\delta=0$) and active ($\Val\delta\neq 0$), and $\Val$ is \emph{inert}
relative to $\Mon$ when it annihilates the blind subspace,
\begin{equation}
\Inert(\Mon,\Val)\ :\Longleftrightarrow\ \ker\Mon \subseteq \ker\Val
\ \Longleftrightarrow\ \Val(\ker\Mon) = \{0\},
\label{eq:inert}
\end{equation}
equivalently (\cref{thm:inertness}(b)) when it \emph{factors through the monitor},
$\Val = C\Mon$ for some $C$: the value path reads $X$ only through what the monitor
exposes. A deployed monitor may post-process $\Mon x$ nonlinearly; every kernel statement
below survives that composition, since $\delta\in\ker\Mon$ leaves $\Mon x$, and hence
any statistic computed from it, unchanged.
\end{definition}

\noindent The concrete
instance we study monitors a single head's query/key map $\pip = [\,W_Q;W_K\,]$ (written
for the monitored head),
an attention-pattern monitor of the safety-probing class~\cite{belinkov2022probing,zou2023repeng},
and confines the layer's value update $W_V$ (the value projection, stacked across the
layer's heads) against it; $\ker\pip$ is the per-head operator-invisibility floor. The
certificate types the whole-layer value update against any declared monitor; the
attention-block \emph{semantic} corollary (\cref{lem:attention}) is head-local unless the
monitor jointly covers every participating head (the joint-coverage predicate,
\cref{sec:predicates});
\cref{sec:eval} runs the end-to-end pipeline against the published SteerEdit attack and typed
garak injection families.

\paragraph{Inertness is built, not tested.} An inert value path actuates no concealed
direction, so no live operator-invisible payload exists: inertness is \emph{exactly} the
negation of channel liveness (\cref{thm:inertness}(a)). It is not a property an arbitrary
adapter has (\cref{sec:eval} shows honest fine-tunes are far from inert), so we do not
test for it; we \emph{install} it.

\paragraph{Confinement makes inertness automatic.} Let $\Pivis$ be the public projection
onto the visible channel $(\ker\Mon)^{\perp}$ (computable from the public monitor; it
annihilates $\ker\Mon$). Serve the value path through $\Pivis$. Then for \emph{any}
value map $\Val$, the confined map $\Val\circ\Pivis$ is inert:
\begin{equation}
\ker\Mon \subseteq \ker\Pivis \ \Longrightarrow\ \Inert(\Mon,\ \Val\circ\Pivis),
\label{eq:confine}
\end{equation}
because every concealed $\delta\in\ker\Mon$ has $\Pivis\delta = 0$
(\cref{thm:inertness}(c)). The served value path therefore actuates no operator-invisible
\emph{linear-carrier} payload of the committed adapter, \emph{by construction}, whatever
$\Val$ is: a structural guarantee about this channel class, not behavioral safety and
not the operator-visible class (\cref{sec:scope}). For a LoRA
update the construction is \emph{typed}: build the read factor as $A = C\Mon$
(\cref{sec:certificate}).

\paragraph{The deployed monitor: QK plus its optimal in-kernel augmentation.} Confining
onto $(\ker\Mon)^{\perp}$ discards the part of the value path that reads $\ker\Mon$, so the
choice of $\Mon$ sets both the utility cost and the residual. The primary threat is defined
by the QK map $\pip$, so the deployed monitor keeps $\pip$ and \emph{augments} it within its
own blind subspace: the defender additionally monitors the $q$ in-kernel directions that
minimize the base value path's residual gain, optimally the top right-singular directions
of $\Valbase$ restricted to $\ker\pip$ (\cref{thm:augment}): a compiler-optimality
statement, given the budget $q$, $\mathsf{CompileMonitor}$'s augmentation is the best in
the declared class and publishes its residual floor. The augmented monitor $\Mdep$
is a deterministic public function of the base model (\cref{sec:threat}). Every payload in the augmented blind subspace is still
operator-invisible ($\ker\Mdep\subseteq\ker\pip$, relative to the same fixed-point
compilation of $\pip$, so the augmentation only
\emph{expands} coverage within $\ker\pip$, never outside it), and the certificate
proves the \emph{update} inert exactly, so the served map's residual blind read is pinned
at the deployed floor, the exact public quantity $\lVert\Valbase P_{\ker\Mdep}\rVert$
computed in the clear; \cref{thm:augment} identifies its ideal-design value
$\sigma_{q+1}(\Valbase\Nbasis_{\pip})$, and the gap between the two is itself public and
verifier-computable; the residual floor is tunable down by raising $q$
(\cref{sec:certificate}).

\paragraph{Two reference policies calibrate it.} The \emph{gain} monitor, whose
visible channel is the top-$k$ right-singular subspace of the base $W_V$, is the
unconstrained optimum for a \emph{different} invisibility policy; its floor is
$\sigma_{k+1}(W_V)$, exactly zero on the low-rank value paths of grouped-query attention. We
use it as an oracle benchmark, not as the deployed policy. A \emph{data}-aligned monitor (the
activation covariance) is the negative control: free benign utility, largest residual
(\cref{sec:eval}).

\paragraph{The predicate family.} The construction extends to other fixed public carrier
subspaces: joint coverage of two monitors, named-trigger inertness, and their compositions.
Each reduces to a homogeneous identity against a public subspace (the cheap ZK direction)
and composes into a single certificate. The full predicate table and the harvested red-team
certificates are in \cref{sec:appendix}. The next section turns
this defense into a certificate: what the publisher proves, and why it is cheap.

\section{The certificate: typed, real-rank-sound, bounded-residual}
\label{sec:certificate}

The defense is cheap to prove for one reason: the monitor is public, so what the publisher
must show is a homogeneous identity between the committed secret and a public matrix. This
section gives that certificate: the typed form we deploy, its basis-dependent equivalent
and that form's load-bearing soundness condition, the field-versus-reals gap, and the public
base floor that bounds the served residual.

\paragraph{The certificate is the adapter's type.} By \cref{thm:inertness}(b), inertness is
exactly factorization through the monitor:
$\ker\Mon\subseteq\ker\Valupd \Leftrightarrow \exists\,C:\ \Valupd = C\Mon$. So rather
than training an unrestricted adapter and afterwards proving a geometric property of it,
the publisher \emph{builds} the adapter monitor-typed. For a LoRA $\Valupd = BA$, construct the
read factor as
\begin{equation}
A = C\Mon, \qquad \Valupd = BA = (BC)\Mon,
\label{eq:typed}
\end{equation}
which is inert for \emph{any} write factor $B$ (\cref{thm:typed}): the adapter reads the
residual stream only through what the monitor exposes,
$x \mapsto \Mon x \mapsto C\Mon x \mapsto BC\Mon x$, and any $x\in\ker\Mon$ dies at the
first arrow. The publisher ships the adapter in the standard serialization $(B,A)$ and
commits to it; the type witness $C$ is a \emph{private} witness committed alongside $A$
in the certificate commitment (\cref{sec:realization}), distinct from the checkpoint
identity $(B,A)$ the hub publishes. The certificate is the identity
\begin{equation}
A - C\Mon = 0
\label{eq:zeroproduct}
\end{equation}
over the committed $A$ and witnessed $C$, against the public $\Mon$: an honest typed
adapter satisfies it by construction, while an $A$ carrying any unconfined component fails
it for every $C$. The check is genuinely \emph{linear} in the committed secret against
public coefficients: no nullspace basis to certify, no in-circuit rank, no spanning check,
and no deficient-basis attack surface. The identity is small ($r\times\dmodel$ entries
with inner dimension $\rank\Mon$), so a circuit can check it in full (the public monitor
baked into the verification key) or probe it at $\zkProbes$ challenges fixed \emph{after} the
commitment; both realizations, and what each adds to the soundness error, are in
\cref{sec:realization}.

\paragraph{The equivalent zero-product, and its spanning condition.} The basis-dependent
equivalent names a public matrix $\Nbasis \in \Real^{\dmodel\times b}$ whose columns span
the blind subspace $\ker\Mon$ ($b = \dmodel - \rank\Mon$) and checks
$\Valupd\Nbasis = 0$, which equals inertness exactly when the columns of $\Nbasis$ span
all of $\ker\Mon$ (\cref{lem:zeroproduct}). This form carries a load-bearing precondition the typed form does
not have: if $\Nbasis$ misses part of $\ker\Mon$, an adversary who knows the deficient
$\Nbasis'$ ships a map reading an omitted direction, passing $\Val\,\Nbasis'=0$ while
non-inert (the generic failure, not a corner case (\cref{thm:spanning})). A verifier using
this form must therefore establish $\Mon\Nbasis = 0$ \emph{and}
$\rank\Nbasis = \dmodel-\rank\Mon$ in the clear (the rank equality alone fixes the
dimension, not the subspace). We use the zero-product form where the subspace \emph{is} the
given object rather than a computed kernel (the named-trigger predicate, whose $\Nbasis$
spans a public trigger family (\cref{tab:predicates}), and the falsification suite, which
exercises the deficient-basis attack (\cref{sec:eval})) and the typed form
\eqref{eq:zeroproduct} as the deployed certificate.

\paragraph{The deployed monitor is the security object.} The circuit reasons over $\Fp$;
the model is real. We close that gap not by approximating an ideal monitor but by
\emph{declaring} the deployed finite-precision monitor $\Mdep$ to be the monitor: it is a
bounded public fixed-point matrix, part of the published monitor policy
(\cref{sec:realization}), and its blind subspace $\ker\Mdep$ is the certified object.
Weights are fixed-point rationals (as are the shipped factors, built on the same encoding
lattice) and the circuit's range checks confine every entry and
product of \eqref{eq:zeroproduct} to the signed integer range, so $A = C\Mdep$ is a
genuine \emph{integer} identity and inertness against $\Mdep$ is exact: for every
$\delta\in\ker\Mdep$, $\Valupd\,\delta = 0$, with no tolerance and no prover-chosen
slack. There is no field-computed nullspace whose rank or lift would need interpretation
(that semantic burden is one reason the zero-product form is not the deployed one: its
basis is computed \emph{over} $\Fp$, whose correspondence to the real kernel needs a
rank-preservation argument the typed form never raises). A deployer who instead wishes to name an \emph{ideal} real-valued monitor and serve
its finite-precision compilation pays an explicit, optional tolerance
\eqref{eq:typedtol}, derived in \cref{sec:zeroproduct}; the deployed guarantee against
$\Mdep$ needs none of it.
\Cref{thm:tolerance} gives the corresponding bound for the zero-product form.

\paragraph{The residual is the public base floor.} The update certificate is exact, so by
\cref{lem:served-floor} the served map's blind read equals the base's own: for any public
monitor with blind-subspace basis $\Nbasis$,
$\lVert\Valsrv\Nbasis\rVert_2 = \lVert\Valbase\Nbasis\rVert_2$, a \emph{public} quantity
the verifier computes from the base model in the clear. This equality is exact for the
\emph{deployed quantized} monitor $\Mdep$; for the ideal real-valued monitor the
encoding bound \eqref{eq:typedtol} adds an explicit tolerance:
\begin{equation}
\lVert \Valsrv P_{\ker\Mon}\rVert_2 \ \le\
\sigma_{q+1}\!\bigl(\Valbase\Nbasis_{\pip}\bigr) \;+\; \eps_{\mathrm{enc}},
\label{eq:residual}
\end{equation}
where $\eps_{\mathrm{enc}} = \lVert BC\rVert\,\lVert\Mon_{\mathrm{fp}}-\Mon\rVert$, with
$\Mon_{\mathrm{fp}}$ the ideal monitor's fixed-point compilation, is the publicly
bounded encoding resolution (\cref{sec:realization}). Under the deployed monitor
$\Mdep$ (the QK map plus $q$ optimal in-kernel directions), \cref{thm:augment} makes
$\sigma_{q+1}$ the smallest residual any \emph{ideal} $q$-direction augmentation
preserving the $\ker\pip$ threat can achieve; the deployed compiled floor tracks it
measurably (at most $\compGapMaxPct\%$ off, median $\compGapMedPct\%$, over the proved
GPT-2 layers, a public gap), dialed down by raising $q$; under the gain oracle
it is $\sigma_{k+1}(W_V)$, reaching exactly zero on the low-rank value paths of
grouped-query attention (\cref{sec:eval}). Here and below $\lVert\cdot\rVert$ denotes the
operator/Euclidean $2$-norm, written $\lVert\cdot\rVert_2$ for emphasis; adapter
\emph{occupancy} (\cref{sec:eval}) uses the Frobenius norm $\lVert\cdot\rVert_F$. A
passing certificate then bounds any downstream readout: for a linear
readout $G$ (an unembedding row, a logit map) and concealed $\delta = \Nbasis c$,
\begin{equation}
\lVert G\,\Valsrv\,\delta\rVert \ \le\ \lVert G\rVert\,\lVert \Valbase\Nbasis\rVert\,\lVert c\rVert
\label{eq:behavioral}
\end{equation}
(\cref{thm:readout}), the update contributing exactly zero. Exact base closure is the
floor's zero endpoint; the public base floor is the deployable interior. \Cref{sec:eval}
measures both floors.

\paragraph{Circuit scale.} The committed object is the adapter, not the model: a rank-$r$
LoRA value adapter has $O(r(\dhead+\dmodel))$ secret entries, the witness $C$ adds
$r\cdot\rank\Mon$, and the certificate is one small matrix identity against the public
$\Mon$. On GPT-2 ($\dmodel=\gptDmodel$, $\dhead=\gptDhead$) the deployed monitor has rank
$\rank\Mon \ll \dmodel$ while the blind subspace has dimension $\gptBlindDim$; a rank-$8$
adapter has $\gptLoraSecretRankEight$ secret entries against $\gptDenseSecret$ for a dense
$\Delta W_V$. Adapter-scale attestation is thus plausible on present proving stacks;
\cref{sec:soundness} states the theorems that make a passing certificate mean what it
claims, and \cref{sec:realization} turns them into a zero-knowledge proof.

\section{Soundness}
\label{sec:soundness}

The certificate of \cref{sec:certificate} is a statement about committed weights; here we
state the theorems that give it meaning. All maps are linear over a field; each statement
carries its proof idea inline, with auxiliary statements and a machine-checked
formalization in \cref{sec:appendix}.

Inertness has three faces: a security property, a normal form, and a construction. The
first says what a passing certificate rules out, the second is what the prover actually
proves, and the third is why a publisher can \emph{install} the property instead of testing
for it.

\begin{theorem}[Inertness: channel-absence, factorization, construction]
\label{thm:inertness}
For a monitor $\Mon$ and value map $\Val$:
\emph{(a) Channel-absence.}
$\Inert(\Mon,\Val) \Leftrightarrow \neg\,\exists\,\delta:\ \Mon\delta = 0 \wedge \Val\delta \neq 0$.
\emph{(b) Factor-through.} $\Inert(\Mon,\Val) \Leftrightarrow \exists\,C:\ \Val = C\circ\Mon$.
\emph{(c) Construction.} If $\ker\Mon \subseteq \ker\Pivis$ then $\Inert(\Mon,\ \Val\circ\Pivis)$
for \emph{every} value map $\Val$; the public projection onto $(\ker\Mon)^{\perp}$ satisfies
the hypothesis.
\end{theorem}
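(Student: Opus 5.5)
The plan is to prove the three parts in order, each reducing to elementary linear algebra over a field. Throughout, $\Mon\colon X\to Y$ and $\Val\colon X\to Z$ are linear maps between finite-dimensional spaces over a common field; the definition of $\Inert$ in \eqref{eq:inert} is the only input.

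\emph{Part (a).} This is a direct unfolding of definitions.
\begin{itemize}
\item Suppose $\ker\Mon\subseteq\ker\Val$. Any $\delta$ with $\Mon\delta=0$ lies in $\ker\Val$, so no $\delta$ is both concealed and active.
\item Conversely, if no such $\delta$ exists, then every $\delta\in\ker\Mon$ has $\Val\delta=0$, which is exactly the inclusion.
\end{itemize}
No structure beyond set membership is used.

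\emph{Part (b).} This is the substantive step.
\begin{itemize}
\item The direction ($\Leftarrow$) is immediate: if $\Val=C\Mon$ and $\Mon\delta=0$, then $\Val\delta=C(0)=0$.
\item For ($\Rightarrow$), I construct $C$ in two stages.
\begin{itemize}
\item First, define $C_0$ on $\img\Mon$ by $C_0(\Mon x):=\Val x$. It is well defined because $\Mon x=\Mon x'$ gives $x-x'\in\ker\Mon\subseteq\ker\Val$. Linearity of $C_0$ follows from linearity of $\Mon$ and $\Val$. Equivalently, $\Val$ descends to the quotient $X/\ker\Mon\cong\img\Mon$.
\item Second, extend $C_0$ to all of $Y$. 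Choose a complement $W$ with $Y=\img\Mon\oplus W$, which exists over any field in finite dimension by extending a basis. Set $C:=C_0$ on $\img\Mon$ and $C:=0$ on $W$.
\end{itemize}
\item Then $C\Mon=\Val$ by construction.
\end{itemize}
In matrix form one could take $C=\Val\Mon^{+}$ over $\Real$. I avoid the pseudoinverse so the statement holds over any field, including $\Fp$, which the certificate needs.

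\emph{Part (c).} Compose with part (a).
\begin{itemize}
\item For $\delta\in\ker\Mon$, the hypothesis gives $\Pivis\delta=0$. Hence $(\Val\circ\Pivis)\delta=\Val(0)=0$, so $\ker\Mon\subseteq\ker(\Val\circ\Pivis)$ for every $\Val$.
\item For the final clause, I need the orthogonal projection $P$ onto $(\ker\Mon)^\perp$, taken in the inner product of $X=\Real^{\dmodel}$, to annihilate $\ker\Mon$. Write $x=u+v$ with $u\in\ker\Mon$ and $v\in(\ker\Mon)^\perp$; then $Px=v$. So for $\delta\in\ker\Mon$, $P\delta=0$, which verifies the hypothesis.
\item Alternatively, (c) follows from (b): since $P=\Mon^{+}\Mon$, we have $\Val P=(\Val\Mon^{+})\Mon$, which factors through $\Mon$.
\end{itemize}

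\emph{Main obstacle.} The only delicate point is the well-definedness and extension step in (b), in particular avoiding any reliance on real-specific tools. Existence of a complement in finite dimension handles this. I would also note that this step is the one matching the Lean-style formalization, where it corresponds to the standard lemma on factoring a linear map through a surjection onto its range.
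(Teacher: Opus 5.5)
Your proposal is correct and follows the paper's own argument: (a) and (c) are the same direct unfoldings, and your two-stage construction of $C$ in (b) is exactly the paper's route of descending to the quotient by $\ker\Mon$ and lifting back through a splitting. Concretely, you define $C_0$ on $\img\Mon$, check it is well defined, and extend it by zero on a complement. Your extra checks fill in details the paper's proof idea leaves implicit: that the orthogonal projection onto $(\ker\Mon)^{\perp}$ annihilates $\ker\Mon$, and that the extension step works over any field.
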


\begin{proof}[Proof idea]
(a) unfolds $\ker\Mon\subseteq\ker\Val$. (b) is the universal property of the quotient: an
inert $\Val$ is constant on the fibers of $\Mon$, so it descends to a well-defined
$\bar\Val\colon\Real^{\dmodel}/\ker\Mon \to \Real^{d_v}$ and lifts back through any
splitting. (c) for $\delta\in\ker\Mon$, $\Pivis\delta = 0$, hence
$(\Val\circ\Pivis)\delta = 0$.
\end{proof}

\noindent The three parts carry three distinct claims. Part (a) is the security meaning: an
inert value path admits no concealed-and-active payload, which turns an existential
impossibility (no hidden payload exists) into a checkable algebraic inclusion
($\ker\Mon\subseteq\ker\Val$), the property prior detection methods attempt to test for but
cannot certify. Part (b) is the basis-free normal form, naming no $\Nbasis$: the value path
reads the residual stream only through what the monitor exposes, so it needs no spanning
check and has no deficient-basis attack surface, and two inputs the monitor cannot
distinguish receive the same value-path output. Part (b) is also what makes the certificate
\emph{cheap}: the identity $A = C\Mon$ is linear in the committed secret against a public
matrix, exactly the structure zero-knowledge proofs are built for, and categorically
distinct from a magnitude bound, which cannot read the kernel relation an
operator-invisible payload hides in~\cite{shangchen2026drift}. Part (c) is why the property
can be built rather than tested: serving $\Valbase + \Valupd\circ\Pivis$ confines the
publisher's contribution while leaving the public base $\Valbase$ untouched. Which map the
certificate should attest, the update or the whole served map, is settled by splitting the
served map into its public and secret parts.

\begin{definition}[Served map: base plus update]
\label{def:served}
The served value map is $\Valsrv = \Valbase + \Valupd$: a public base map $\Valbase$
(a deterministic function of the base model) plus a committed secret update $\Valupd$
(the shipped adapter; $\Valupd = BA$ for a rank-$r$ LoRA, commitments as in
\cref{sec:threat}). Throughout, $\Valsrv$ is the value
map of the \emph{certified released checkpoint}; binding it to the model a serving endpoint
actually loads is a separate obligation we scope in \cref{sec:scope}.
\end{definition}

\begin{lemma}[Residual is the public base floor]
\label{lem:served-floor}
$\Valsrv\Nbasis = \Valbase\Nbasis + \Valupd\Nbasis$, and if the update is inert
($\Valupd\Nbasis = 0$) then $\Valsrv\Nbasis = \Valbase\Nbasis$. The right side is
\emph{public}: the verifier computes $\Valbase\Nbasis$ from the base model and the public
$\Nbasis$, so an inert update pins the served residual at the base floor in every norm,
with no prover-chosen tolerance.
\end{lemma}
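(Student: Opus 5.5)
The plan is to argue directly from \cref{def:served} and linearity of matrix multiplication, then connect the hypothesis $\Valupd\Nbasis = 0$ to the certified typed identity so the lemma applies to the deployed certificate.

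\textbf{Step 1: the decomposition.} By \cref{def:served}, $\Valsrv = \Valbase + \Valupd$ as linear maps $X \to \Real^{d_v}$. Composing on the right with the public matrix $\Nbasis$ distributes over the sum, giving
\[
\Valsrv\Nbasis = (\Valbase + \Valupd)\Nbasis = \Valbase\Nbasis + \Valupd\Nbasis .
\]
This holds over the reals, and equally over the fixed-point integers once range checks rule out wraparound.

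\textbf{Step 2: the inert case.} Substituting $\Valupd\Nbasis = 0$ yields $\Valsrv\Nbasis = \Valbase\Nbasis$. This is an equality of matrices, so every function of the matrix agrees on both sides. In particular any norm agrees, whether operator, Frobenius, or the readout bound \eqref{eq:behavioral}. That is the ``in every norm'' clause; no inequality is involved, so no tolerance appears.

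\textbf{Step 3: publicity.} $\Valbase$ is a deterministic function of the public base model (\cref{sec:threat}), and $\Nbasis$ is a public basis of the public monitor's kernel. Hence the verifier can form the product $\Valbase\Nbasis$ in the clear, without access to $(A,C)$.

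\textbf{Step 4: linking to the certificate.} I would also note how the hypothesis is met by the deployed certificate. If $A = C\Mdep$, then $\Valupd = BC\Mdep$. For any $\Nbasis$ with $\Mdep\Nbasis = 0$ this gives $\Valupd\Nbasis = 0$, either by \cref{thm:inertness}(b) or by direct multiplication. This direction requires only that the columns of $\Nbasis$ lie in $\ker\Mdep$, not that they span it. The spanning condition of \cref{lem:zeroproduct} matters only if one wants the converse, i.e.\ to read the floor $\lVert\Valbase\Nbasis\rVert$ as the full blind-subspace residual $\lVert\Valbase P_{\ker\Mdep}\rVert$.

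\textbf{Expected obstacle.} The only delicate point is that last interpretive remark: when $\Nbasis$ is not orthonormal, the norm of $\Valbase\Nbasis$ and of $\Valbase P_{\ker\Mdep}$ differ by basis scaling. I would state the lemma basis-wise, exactly as written, and defer the orthonormal-basis identification to \cref{thm:augment}.
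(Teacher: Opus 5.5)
Your proposal is correct and follows the same argument the paper relies on (stated without a written proof and mechanized as \texttt{served\_comp\_eq\_base\_of\_update\_inert}). You distribute right-composition with $\Nbasis$ over $\Valsrv = \Valbase + \Valupd$ from \cref{def:served}, substitute $\Valupd\Nbasis = 0$, and observe that $\Valbase\Nbasis$ is built only from public objects. Your Step~4 link to \cref{thm:typed} is a correct addition, since that direction needs only $\Mdep\Nbasis = 0$, but the lemma itself does not require it.
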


\begin{proposition}[Attest the update, not the served map]
\label{prop:served}
\emph{(a) Isolation.} If $\Inert(\Mon,\Valupd)$ then
$\Inert(\Mon,\Valsrv) \Leftrightarrow \Inert(\Mon,\Valbase)$.
\emph{(b) Rank obstruction.} If $\rank\Valupd \le r$ and $\rank(\Valbase\Nbasis) > r$, then
$\Valsrv\Nbasis \neq 0$.
\end{proposition}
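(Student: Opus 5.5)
Part (a) is a kernel-intersection argument. Part (b) is a rank count on the blind read $\Valsrv\Nbasis$, with $\Nbasis$ a basis of $\ker\Mon$ as in \cref{lem:zeroproduct}.

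\emph{(a) Isolation.} I will work with the defining inclusion \eqref{eq:inert}, $\ker\Mon\subseteq\ker\Val$, for each of the three maps.
\begin{itemize}
\item Assume $\Inert(\Mon,\Valupd)$, so $\Valupd\delta=0$ for every $\delta\in\ker\Mon$.
\item By linearity of $\Valsrv=\Valbase+\Valupd$ (\cref{def:served}), for every $\delta\in\ker\Mon$ we get $\Valsrv\delta=\Valbase\delta$.
\item Hence the restrictions of $\Valsrv$ and $\Valbase$ to $\ker\Mon$ coincide. One vanishes identically on $\ker\Mon$ iff the other does, which is exactly the biconditional.
\end{itemize}
Equivalently, in basis form: with $\Nbasis$ spanning $\ker\Mon$, \cref{lem:served-floor} gives $\Valsrv\Nbasis=\Valbase\Nbasis$. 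Inertness is then $\Valsrv\Nbasis=0$ versus $\Valbase\Nbasis=0$, which agree. I will note that the basis form needs the spanning condition of \cref{lem:zeroproduct}. The direct kernel argument avoids that condition, so I will present the direct argument as the proof and the basis form as a remark.

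\emph{(b) Rank obstruction.} Suppose for contradiction that $\Valsrv\Nbasis=0$. Then $\Valbase\Nbasis=-\Valupd\Nbasis$. Rank is submultiplicative, $\rank(XY)\le\min(\rank X,\rank Y)$, so
\[
\rank(\Valbase\Nbasis)=\rank(\Valupd\Nbasis)\le\rank\Valupd\le r.
\]
This contradicts the hypothesis $\rank(\Valbase\Nbasis)>r$, so $\Valsrv\Nbasis\neq 0$.

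More quantitatively, subadditivity of rank gives
\[
\rank(\Valsrv\Nbasis)\ge\rank(\Valbase\Nbasis)-r>0 .
\]
I may state this stronger form, since it says a rank-$r$ update can cancel at most $r$ dimensions of the base blind read.

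There is no real obstacle here. The only point needing care is that (b) is a statement about the matrix $\Nbasis$ as given: it does not require $\Nbasis$ to span $\ker\Mon$, though it is meaningful when it does. I will also remark that (b) does not use the inertness hypothesis of (a). It holds for any low-rank update, whether honest or adversarial. This is exactly why the certificate attests the update: no rank-$r$ adapter can drive the served residual to zero when the base blind read has rank above $r$.
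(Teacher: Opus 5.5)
Your proof is correct and follows the paper's argument: (a) is the observation that $\Valsrv$ and $\Valbase$ agree on $\ker\Mon$ once the update is inert, and (b) is the same contradiction via $\rank(\Valbase\Nbasis)=\rank(\Valupd\Nbasis)\le\rank\Valupd\le r$. Your bound $\rank(\Valsrv\Nbasis)\ge\rank(\Valbase\Nbasis)-r$, obtained from subadditivity of rank, is a valid and slightly stronger form of the same count.
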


\begin{proof}[Proof idea]
(a) with the update inert, $\Valsrv\delta = \Valbase\delta$ for every $\delta\in\ker\Mon$.
(b) if $\Valsrv\Nbasis = 0$ then $\Valupd\Nbasis = -\,\Valbase\Nbasis$, so
$\rank(\Valbase\Nbasis) = \rank(\Valupd\Nbasis) \le \rank\Valupd \le r$, contradicting
$\rank(\Valbase\Nbasis) > r$.
\end{proof}

\noindent Together these fix the object to certify. By (a) the publisher's update neither
opens nor closes a channel the public base lacks, so attesting $\Inert(\Mon,\Valupd)$
(equivalently the typed factorization $\Valupd = C'\Mon$, \cref{thm:inertness}(b), or the
zero-product $\Valupd\Nbasis = 0$ against a spanning basis, \cref{lem:zeroproduct})
certifies precisely what the publisher adds. By (b) the alternative is unavailable: no
rank-$r$ update makes the served map blind-inert once the base reads the blind subspace at
rank above $r$, so ``the served model reads nothing blind'' is unachievable at adapter
scale. We therefore confine the update (\cref{thm:inertness}(c) on $\Valupd$) and report the
base floor, rather than attempt an infeasible whole-map closure.

\begin{theorem}[Typed LoRA is inert by construction]
\label{thm:typed}
If a LoRA's read factor factors through the monitor, $A = C\Mon$, then $\Valupd = BA$ is inert
for \emph{any} $B$. Confinement is then a property of the adapter's \emph{type}: the publisher
ships and commits the standard factors $(B,A)$, holds the type witness
$C\in\Real^{r\times s}$ ($s=\rank\Mon$) privately, and the certificate is the identity
$A - C\Mon = 0$, \emph{linear} in the committed secret against the public $\Mon$. Conversely
every rank-$r$ inert update admits such a factorization (\cref{thm:inertness}(b) applied to
$\Valupd$, then a rank factorization), so the typed form loses no generality at the
\emph{dense-update} level, up to refactorization of the same $\Valupd$. The qualification is
real: cancellation through $B$ can make $BA$ inert while $A$ itself is not, so typing the
shipped read factor is the stronger per-factor property, and a publisher training under
confinement chooses this parameterization from the start.
\end{theorem}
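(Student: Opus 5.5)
The plan is to prove the three assertions of \cref{thm:typed} in order: the forward direction (typed implies inert), the converse (inert rank-$r$ updates can be refactored into typed form), and the strictness remark (per-factor typing is stronger than inertness of the product). Everything reduces to \cref{thm:inertness}(b) plus elementary linear algebra.

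\textbf{Forward direction.} Suppose $A = C\Mon$. Then $\Valupd = BA = (BC)\Mon$. By \cref{thm:inertness}(b) with witness $BC$, this gives $\Inert(\Mon,\Valupd)$. Directly: for $\delta\in\ker\Mon$, $\Valupd\delta = BC(\Mon\delta) = 0$, and nothing about $B$ was used. The assertion that the certificate $A - C\Mon = 0$ is linear in the committed secret needs only an observation, not a proof. With $\Mon$ public, the map $(A,C)\mapsto A - C\Mon$ is linear in the secret pair $(A,C)$. The shapes follow from taking $\Mon$ with $s=\rank\Mon$ rows, or from first replacing $\Mon$ by a full-row-rank map with the same kernel, e.g.\ its row-space restriction. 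I will note this choice so that $C\in\Real^{r\times s}$ is well typed.

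\textbf{Converse.} Let $\Valupd$ be inert with $\rank\Valupd\le r$. By \cref{thm:inertness}(b), $\Valupd = C'\Mon$ for some $C'$. Take a rank factorization $\Valupd = B A_0$ with $B\in\Real^{d_v\times r}$ of full column rank, padded with zero columns if the rank is below $r$, and $A_0\in\Real^{r\times\dmodel}$. The step that needs care is showing $A_0$ itself factors through $\Mon$. Since $B$ has a left inverse $B^{+}$ (after discarding padded columns), $A_0 = B^{+}\Valupd = (B^{+}C')\Mon$, so $C = B^{+}C'$ witnesses $A_0 = C\Mon$. For padded zero rows of $A_0$, take the corresponding rows of $C$ to be zero. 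This yields a typed factorization of the same dense update, which justifies ``no loss of generality up to refactorization.''

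\textbf{Strictness.} I will give an explicit example in which $BA$ is inert but $A$ is not. Take $\dmodel=2$, $\Mon = [1\ 0]$ so that $\ker\Mon=\operatorname{span}(e_2)$, $r=2$, $A = I_2$, and $B=[1\ 0]$. Then $BA = \Mon$ is inert, but $Ae_2 = e_2\neq0$, so no $C$ satisfies $A = C\Mon$. This shows that typing the shipped read factor is strictly stronger than inertness of the product. The main obstacle is only bookkeeping in the converse: the left-inverse argument requires $B$ of full column rank, so the rank-deficient case has to be handled by padding as above. Apart from that, every step is immediate from \cref{thm:inertness}(b).
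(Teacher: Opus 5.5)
Your proposal is correct and follows the same route the paper indicates. The forward direction goes via $\Valupd = (BC)\Mon$, and the converse via \cref{thm:inertness}(b) followed by a rank factorization, with your left-inverse step making explicit that the new read factor itself factors through $\Mon$. Your example with $A = I_2$ and $B = [1\ 0]$ is a valid concrete instance of the cancellation-through-$B$ qualification, which the paper only asserts.
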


\noindent The basis-dependent zero-product equivalent and its load-bearing spanning
precondition (\cref{sec:certificate}) are stated and proved in \cref{sec:appendix}.

\noindent The next two facts bound the residual: the field-versus-reals gap, and the
behavioral shift of an $\eps$-inert defense.

\begin{theorem}[Encoding tolerance]
\label{thm:tolerance}
For bounded operators with $\Val_{\mathrm{fp}}\circ\Nbasis_{\mathrm{fp}} = 0$,
$\lVert \Val\circ\Nbasis \rVert \le \lVert \Val-\Val_{\mathrm{fp}}\rVert\,\lVert \Nbasis\rVert +
\lVert \Val_{\mathrm{fp}}\rVert\,\lVert \Nbasis-\Nbasis_{\mathrm{fp}}\rVert$.
\end{theorem}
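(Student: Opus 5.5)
The plan is to add and subtract $\Val_{\mathrm{fp}}\circ\Nbasis$ and split the result into two perturbation terms. We start from the hypothesis $\Val_{\mathrm{fp}}\circ\Nbasis_{\mathrm{fp}} = 0$ and write
\[
\Val\circ\Nbasis \;=\; (\Val-\Val_{\mathrm{fp}})\circ\Nbasis \;+\; \Val_{\mathrm{fp}}\circ(\Nbasis-\Nbasis_{\mathrm{fp}}) \;+\; \Val_{\mathrm{fp}}\circ\Nbasis_{\mathrm{fp}} .
\]
This identity is pure bilinearity of composition. The middle and last terms together expand to $\Val_{\mathrm{fp}}\circ\Nbasis$, and the first term contributes $\Val\circ\Nbasis - \Val_{\mathrm{fp}}\circ\Nbasis$. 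The third term vanishes by hypothesis, so only two terms remain.

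Next we take operator norms. The triangle inequality bounds $\lVert\Val\circ\Nbasis\rVert$ by the sum of the norms of the two surviving terms. Submultiplicativity of the operator norm then gives
\[
\lVert(\Val-\Val_{\mathrm{fp}})\circ\Nbasis\rVert \le \lVert\Val-\Val_{\mathrm{fp}}\rVert\,\lVert\Nbasis\rVert,
\qquad
\lVert\Val_{\mathrm{fp}}\circ(\Nbasis-\Nbasis_{\mathrm{fp}})\rVert \le \lVert\Val_{\mathrm{fp}}\rVert\,\lVert\Nbasis-\Nbasis_{\mathrm{fp}}\rVert .
\]
Adding these two bounds yields the claim. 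Both steps use only that the maps are bounded, so the statement holds for any submultiplicative norm, including the operator $2$-norm the paper fixes in \cref{sec:certificate}.

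There is no real obstacle here. The only choice is which mixed term to insert, and that choice determines the shape of the bound. Inserting $\Val_{\mathrm{fp}}\circ\Nbasis$ produces exactly the stated form, with $\lVert\Nbasis\rVert$ on the real side and $\lVert\Val_{\mathrm{fp}}\rVert$ on the fixed-point side. Inserting $\Val\circ\Nbasis_{\mathrm{fp}}$ would instead give the symmetric variant with $\lVert\Val\rVert$ and $\lVert\Nbasis_{\mathrm{fp}}\rVert$, which is also valid but not the form stated. One should also remark that the fixed-point objects are treated as real matrices here. The field identity $\Val_{\mathrm{fp}}\Nbasis_{\mathrm{fp}}=0$ is assumed to hold over the integers, by the range-check argument of \cref{sec:certificate}, and hence over $\Real$; that is what lets the hypothesis be used in the real-valued norm estimate.
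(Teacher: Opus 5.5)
Your proof is correct and is the same add-and-subtract argument the paper relies on. The paper gives no written proof beyond citing the Lean declaration \texttt{comp\_perturbation}, and your steps match it: insert $\Val_{\mathrm{fp}}\circ\Nbasis$, drop $\Val_{\mathrm{fp}}\circ\Nbasis_{\mathrm{fp}}=0$, then apply the triangle inequality and submultiplicativity to get exactly the stated bound. Your closing remark about lifting the field identity to the integers is harmless context but not needed, since the theorem takes the real operator identity $\Val_{\mathrm{fp}}\circ\Nbasis_{\mathrm{fp}}=0$ as its hypothesis.
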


The next lemma grounds the abstract maps in an actual attention block.

\begin{lemma}[Attention instantiation]
\label{lem:attention}
Fix an attention layer with post-LayerNorm input $x$, the monitored head's query/key map
$\pip$ (\cref{def:setup}), base value map $\Valbase=W_V$, output map $W_O$, and value update
$\Valupd$. For a payload $\delta\in\ker\pip$ (so $W_Q\delta=W_K\delta=0$ \emph{for the
monitored head}): (i) the monitored head's attention weights
$\mathrm{softmax}(QK^{\top}/\sqrt{\dhead})$ are unchanged when $\delta$ is added to
the input of a token $j$, since its $Q,K$ do not read $\delta$; (ii) with that head's
(unchanged) attention matrix $\alpha$, its output contribution at token $i$ shifts by
$\alpha_{ij}\,W_O(\Valbase+\Valupd)\delta$,
which factors through $(\Valbase+\Valupd)\delta$. Hence an update with $\Valupd\delta=0$
contributes nothing and the residual shift is the base term $\alpha_{ij}\,W_O\Valbase\delta$,
bounded by \cref{thm:readout} with readout $G=W_O$. The semantic reading is
\emph{head-local}: another head whose query/key maps read $\delta$ may change its own
routing. If the declared monitor jointly stacks every head's query/key map (the
joint-coverage predicate, \cref{sec:predicates}), a $\delta$ in the joint kernel fixes
every head's routing and the block-level claim closes.
\end{lemma}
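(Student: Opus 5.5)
The plan is to write out the attention computation of the monitored head explicitly and track how the perturbation $\delta$, added only to token $j$'s input, propagates through it. Write $x_t$ for the post-LayerNorm inputs and set $x'_t = x_t + [t=j]\,\delta$. The head computes queries $q_t = W_Q x_t$, keys $k_t = W_K x_t$, and values $v_t = (\Valbase+\Valupd)x_t$. Its output at token $i$ is $W_O\sum_t \alpha_{it} v_t$, where $\alpha_{it} = \mathrm{softmax}_t(q_i^\top k_t/\sqrt{\dhead})$.

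For part (i), the hypothesis $\delta\in\ker\pip$ unpacks, since $\pip = [W_Q;W_K]$ is stacked, into $W_Q\delta = 0$ and $W_K\delta = 0$. Consequently $q'_t = q_t$ and $k'_t = k_t$ for every $t$, including $t=j$. Every logit $q_i^\top k_t$ is therefore unchanged, and the softmax, as a function of the logits alone, leaves $\alpha$ unchanged. This is the step that uses the convention that the perturbation enters after LayerNorm. I will state explicitly that $x$ denotes the post-LayerNorm input, as in the lemma, so no nonlinearity of LayerNorm intervenes.

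For part (ii), with $\alpha$ fixed, the output is linear in the values. Only $v_j$ changes, and it changes by $(\Valbase+\Valupd)\delta$, so the shift at token $i$ is $W_O\,\alpha_{ij}(\Valbase+\Valupd)\delta = \alpha_{ij}W_O(\Valbase+\Valupd)\delta$, using that $\alpha_{ij}$ is a scalar. This shift visibly factors through $(\Valbase+\Valupd)\delta$. When $\Valupd\delta = 0$, linearity splits the term and the update's contribution vanishes, leaving $\alpha_{ij}W_O\Valbase\delta$; this mirrors \cref{lem:served-floor} at the level of a single vector. For the bound, I write $\delta = \Nbasis c$ and invoke \cref{thm:readout} with $G = W_O$. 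I then scale by $|\alpha_{ij}|\le 1$, which holds because softmax weights lie in $[0,1]$.

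The head-local caveat needs no proof beyond observing that another head's $W_Q^{(h')}$ and $W_K^{(h')}$ are not constrained by $\delta\in\ker\pip$. For the joint-coverage closure, I reuse part (i) head by head: if $\delta$ lies in the kernel of the stacked map of all heads' query/key matrices, then it lies in each head's $\ker[W_Q^{(h)};W_K^{(h)}]$. Every attention matrix is then fixed, and summing the per-head shifts of part (ii) gives the block shift $\sum_h \alpha^{(h)}_{ij}W_O^{(h)}(\Valbase+\Valupd)^{(h)}\delta$, which again vanishes in its update part under inertness. The main obstacle is bookkeeping rather than mathematics. The value update is stacked across heads while the monitor is per-head, so I must be careful that $\Valupd\delta = 0$ refers to the head's block (or the whole stacked update) and that $W_O$ is that head's slice. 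Beyond that, each step is a one-line linearity argument.
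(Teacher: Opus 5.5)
Your proposal is correct and follows essentially the paper's own argument. The monitored head's $Q,K$, and hence $\alpha$, are invariant under $\delta\in\ker\pip$, and the head output is linear in the values, so the shift is $\alpha_{ij}W_O(\Valbase+\Valupd)\delta$. The paper writes this shift in matrix form as $W_O(\Valbase+\Valupd)\delta\,e_j^{\top}\alpha^{\top}$; you write it token by token, and you also supply details the paper leaves implicit: the $|\alpha_{ij}|\le 1$ scaling for the readout bound, the per-head slicing of the stacked $W_V$ and $W_O$, and the per-head summation under joint coverage.
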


\begin{proof}[Proof idea]
Write tokens as columns $X$; the monitored head's $Q,K$ are linear in $X$, so for
$\delta\in\ker\pip$ that head's queries and keys are invariant and its attention
$\alpha$ is unchanged. That head's
output contribution is $W_O(\Valbase+\Valupd)X\,\alpha^{\top}$; adding $\delta$ at token $j$ changes it by
$W_O(\Valbase+\Valupd)\delta\,e_j^{\top}\alpha^{\top}$, i.e.\ $\alpha_{ij}\,W_O(\Valbase+\Valupd)\delta$
at output token $i$.
\end{proof}

\begin{theorem}[Bounded behavioral shift]
\label{thm:readout}
For any downstream linear readout $G$, value map $\Val$, and column-orthonormal basis
$\Nbasis$ of the blind subspace, and any concealed payload $\delta = \Nbasis c$
(so $\lVert\delta\rVert = \lVert c\rVert$),
\[
\lVert G\,\Val\,\delta\rVert \ \le\ \lVert G\rVert\,\lVert \Val\circ\Nbasis\rVert\,\lVert c\rVert .
\]
Instantiating with the served map, an inert update pins
$\lVert\Valsrv\Nbasis\rVert = \lVert\Valbase\Nbasis\rVert$ (\cref{lem:served-floor}), the
public base floor: under the deployed compiled monitor this is the exact public quantity
$\lVert\Valbase P_{\ker\Mdep}\rVert$, whose ideal-design value is
$\sigma_{q+1}(\Valbase\Nbasis_\pip)$ (\cref{sec:certificate}), at most
$\sigma_{k+1}(W_V)$ under the gain oracle; the update's own readout shift is exactly zero
($\Valupd\Nbasis = 0$), and the endpoint $\Valbase\Nbasis = 0$ makes every readout
invariant to concealed payloads.
\end{theorem}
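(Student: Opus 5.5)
The core inequality is submultiplicativity of the operator norm, applied after writing the concealed payload in the blind basis. Since $\delta = \Nbasis c$, associativity gives $G\,\Val\,\delta = G\,(\Val\circ\Nbasis)\,c$. Two applications of $\lVert XY\rVert\le\lVert X\rVert\,\lVert Y\rVert$ then give $\lVert G\,\Val\,\delta\rVert \le \lVert G\rVert\,\lVert\Val\circ\Nbasis\rVert\,\lVert c\rVert$. The parenthetical $\lVert\delta\rVert=\lVert c\rVert$ uses column-orthonormality: $\Nbasis^{\top}\Nbasis = I_b$, so $\lVert\Nbasis c\rVert^2 = c^{\top}\Nbasis^{\top}\Nbasis c = \lVert c\rVert^2$. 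This lets the bound be read per unit payload norm. The inequality itself needs no orthonormality, only that $\delta$ lies in the column span of $\Nbasis$.

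The instantiation clauses I would chain from earlier results. Setting $\Val=\Valsrv$ and invoking \cref{lem:served-floor} with $\Valupd\Nbasis=0$ gives $\Valsrv\Nbasis=\Valbase\Nbasis$, hence equality of norms. The update's own readout shift, $G\,\Valupd\,\Nbasis c$, is identically zero. At the endpoint $\Valbase\Nbasis=0$, the right side vanishes, so $G\,\Valsrv\,\delta=0$ for every concealed $\delta$.

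To identify the floor with $\lVert\Valbase P_{\ker\Mdep}\rVert$, take $\Nbasis$ to be an orthonormal basis of $\ker\Mdep$. Then $P_{\ker\Mdep}=\Nbasis\Nbasis^{\top}$, and since $\Nbasis^{\top}$ is a co-isometry onto $\Real^b$, we get $\lVert\Valbase\Nbasis\Nbasis^{\top}\rVert=\lVert\Valbase\Nbasis\rVert$.

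The step needing the most care is the claim about the ideal-design and gain-oracle values. Here I would not reprove anything but cite \cref{thm:augment}. Removing the top $q$ right-singular directions of $\Valbase\Nbasis_\pip$ from $\ker\pip$ leaves a restriction whose operator norm is $\sigma_{q+1}$, by the SVD (Eckart--Young for the spectral norm). Likewise, the kernel of the gain monitor is the span of the trailing right-singular vectors of $W_V$, on which $\lVert W_V\rVert$ equals $\sigma_{k+1}(W_V)$. I would check that the statement asserts these as the values of the respective floors under the respective monitors, not as a single inequality across monitors. I would also make sure the compiled-versus-ideal gap is described as public rather than bounded, since \cref{sec:certificate} treats it as measured.
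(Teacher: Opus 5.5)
Your proposal is correct and follows the paper's route. The paper gives no written proof beyond the Lean lemma \texttt{readout\_shift\_le}, but the core bound is exactly submultiplicativity of the operator norm after writing $\delta = \Nbasis c$, and the instantiation clauses come from citing \cref{lem:served-floor} and \cref{thm:augment} as you do; your reading of ``at most $\sigma_{k+1}(W_V)$ under the gain oracle'' as the floor under that monitor, not a cross-monitor inequality, is also the right one.
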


\noindent The floor depends on the monitor. The deployed monitor \emph{augments} the QK map
within its own blind subspace, expanding coverage of $\ker\pip$ without monitoring outside
it ($\ker\Mdep\subseteq\ker\pip$), and there is a best way to do so.

\begin{theorem}[Optimal in-kernel augmentation]
\label{thm:augment}
Let $\Nbasis_\pip$ have orthonormal columns spanning $\ker\pip$. If the defender monitors,
besides $\pip$, a $q$-dimensional $T\subseteq\ker\pip$ (leaving residual blind subspace
$\ker\pip\cap T^{\perp}$), then the smallest achievable residual gain of the base value path is
\[
\min_{\substack{T\subseteq\ker\pip,\ \dim T=q}}
\bigl\lVert \Valbase\,P_{\ker\pip\cap T^{\perp}} \bigr\rVert
\ =\ \sigma_{q+1}\!\bigl(\Valbase\Nbasis_\pip\bigr),
\]
attained when $T$ spans (through $\Nbasis_\pip$) the top-$q$ right-singular directions of
$\Valbase\Nbasis_\pip$. This is a lower bound among augmentations of the stated size and nesting: no
$q$-direction augmentation within $\ker\pip$ can achieve a smaller residual, so the
all-layer frontier of \cref{sec:eval} plots the ideal-design optimum for this monitor class.
\end{theorem}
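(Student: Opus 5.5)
The plan is to reduce the statement to the Eckart--Young--Mirsky (equivalently Courant--Fischer) characterization of $\sigma_{q+1}$ by working in coordinates on $\ker\pip$. Write $b=\dim\ker\pip$ and let $\Nbasis_\pip\in\Real^{\dmodel\times b}$ have orthonormal columns. Every $q$-dimensional $T\subseteq\ker\pip$ can be written as $T=\Nbasis_\pip S$ for a unique $q$-dimensional $S\subseteq\Real^b$. The residual blind subspace is then $\ker\pip\cap T^{\perp}=\Nbasis_\pip(S^{\perp})$, where $S^{\perp}$ is taken inside $\Real^b$, because $\Nbasis_\pip$ is an isometry onto $\ker\pip$. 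Hence
\[
P_{\ker\pip\cap T^{\perp}}=\Nbasis_\pip P_{S^{\perp}}\Nbasis_\pip^{\top},
\]
and since $\Nbasis_\pip^{\top}$ is a co-isometry,
\[
\bigl\lVert\Valbase P_{\ker\pip\cap T^{\perp}}\bigr\rVert=\bigl\lVert(\Valbase\Nbasis_\pip)P_{S^{\perp}}\bigr\rVert .
\]
Setting $W:=\Valbase\Nbasis_\pip\in\Real^{d_v\times b}$, the problem becomes: minimize $\lVert W P_{S^{\perp}}\rVert_2$ over $q$-dimensional subspaces $S\subseteq\Real^b$.

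For the \emph{upper bound}, take the SVD $W=\sum_i\sigma_i u_i v_i^{\top}$ and set $S=\operatorname{span}(v_1,\dots,v_q)$. Then $WP_{S^{\perp}}=\sum_{i>q}\sigma_i u_i v_i^{\top}$, whose operator norm is $\sigma_{q+1}(W)$. This gives attainment at the stated $T=\Nbasis_\pip S$, which is the top-$q$ right-singular directions of $\Valbase\Nbasis_\pip$ pushed through $\Nbasis_\pip$.

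For the \emph{lower bound}, I would use a dimension count. Let $S$ be any $q$-dimensional subspace and let $V_{q+1}=\operatorname{span}(v_1,\dots,v_{q+1})$. Since $\dim S^{\perp}+\dim V_{q+1}=(b-q)+(q+1)>b$, there is a unit vector $x\in S^{\perp}\cap V_{q+1}$. For this $x$ we have $P_{S^{\perp}}x=x$ and $\lVert Wx\rVert\ge\sigma_{q+1}$, because $x$ lies in the span of the top $q+1$ right-singular vectors. Hence $\lVert WP_{S^{\perp}}\rVert\ge\sigma_{q+1}(W)$.

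The edge case $q\ge b$ needs a separate remark: then $T=\ker\pip$, the residual is $0$, and the convention $\sigma_{q+1}=0$ makes the formula consistent. If $q+1>\rank W$, the dimension argument still goes through using any orthonormal completion $v_{q+1}$ of the right-singular vectors, with $\sigma_{q+1}=0$.

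The main obstacle is the first reduction: checking that $\ker\pip\cap T^{\perp}$, where the complement is taken in $\Real^{\dmodel}$, corresponds exactly to $S^{\perp}$ taken inside $\Real^b$, and that the projection identity holds without cross terms leaking outside $\ker\pip$. I would verify this by noting that $T\subseteq\ker\pip$ implies $\ker\pip=T\oplus(\ker\pip\cap T^{\perp})$ orthogonally, and then pulling this decomposition back through the isometry $\Nbasis_\pip$. Once that is in place, the rest is the standard Courant--Fischer minimax argument, with no paper-specific input beyond the nesting $\ker\Mdep\subseteq\ker\pip$ stated in \cref{sec:defense}.
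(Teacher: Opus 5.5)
Your proposal is correct and takes the same route as the paper. The paper's proof idea also rewrites the residual through $\Valbase\Nbasis_\pip$ (via $\Valbase P_{\ker\pip}=(\Valbase\Nbasis_\pip)\Nbasis_\pip^{\top}$) and invokes Courant--Fischer; your SVD-truncation upper bound and your $(q{+}1)$-dimensional intersection lower bound are exactly the two halves of that argument, and you supply more detail than the paper on the projector identity $P_{\ker\pip\cap T^{\perp}}=\Nbasis_\pip P_{S^{\perp}}\Nbasis_\pip^{\top}$ and the edge cases (the nesting $\ker\Mdep\subseteq\ker\pip$ you cite is not actually used).
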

\noindent One caveat this makes explicit: since $\ker\Mdep\subsetneq\ker\pip$, a typed adapter
reads the augmentation directions $T=\ker\pip\cap(\ker\Mdep)^\perp$, which a bare attention-pattern
probe (a function of $\pip$ alone) does not see. The confinement guarantee is therefore relative to
the \emph{deployed} $\Mdep$: it holds when the served monitor is $\Mdep$ itself, augmentation rows
included, and reduces to $\ker\pip$-confinement only if $q=0$. The frontier is the graded form of an exact-certification budget: the smallest
$q$ at which $\sigma_{q+1}(\Valbase\Nbasis_\pip)$ reaches zero is
$\dim \Valbase(\ker\pip) = \rank[\,\pip;\Valbase\,] - \rank\pip$, the minimum number of
monitored directions any in-kernel augmentation needs before the base value path's blind
channel closes exactly. Below that endpoint the curve prices every partial budget, so a
deployer reads off the residual purchased by each additional monitored direction.

\section{Zero-knowledge realization}
\label{sec:realization}

The theorems of \cref{sec:soundness} establish what a passing certificate means; this
section shows how to prove it in zero knowledge without revealing the read factor it certifies.
The certificate is a commit-and-prove statement: commit once to the certified read factor
and type witness, then prove that the committed read factor is monitor-typed ($A - C\Mdep = 0$
against the public deployed monitor).

\paragraph{Commitment.} The publisher commits to the read factor $A$ and type witness $C$
with a \emph{hiding} commitment so it leaks nothing about the committed pair: a randomized hash commitment
$\commit = H(\textsf{domain},\rho_{\mathrm{com}},\mathrm{serialize}(A,C))$ with a private
high-entropy salt $\rho_{\mathrm{com}}$, canonical length-prefixed serialization, and domain
separation. This is the commitment the measured circuit implements (the salt is a private
witness of the in-circuit hash, \cref{sec:eval}), with a blinded KZG~\cite{kzg2010} or
Pedersen commitment the alternative. The commitment $\commit$ is a \emph{certificate
commitment}: it binds the read factor $A$ and type witness $C$ that the proof checks, not
the full checkpoint $(B,A)$; $B$ is not part of the circuit and does not need
commitment-binding. The \emph{checkpoint identity} (the artifact a hub publishes) remains
$(B,A)$ in whatever serialization the deployment uses; a separate binding of $\commit$ to
the checkpoint hash, or a Merkle proof opening $A$ from the checkpoint, ties the
certificate to a specific shipped artifact.

\paragraph{What is proved.} One public-linear fact about the committed secret: the shipped
read factor is monitor-typed, $A = C\Mdep$ for some $C$, so the update $BA$ is inert by
construction for \emph{any} write factor (\cref{thm:typed}). Two realizations check it,
neither with an in-circuit rank or nullspace. A custom circuit bakes the public $\Mdep$
into the verification key and checks the identity \emph{in full}, with no challenge and no
probabilistic term. The measured stock-toolchain instantiation keeps the circuit minimal
and \emph{probes the committed typed identity directly}: the public inputs are a challenge
vector $r$ and the verifier-recomputed $m = \Mdep\,r$, the circuit checks
$(A - C\Mdep)\,r = 0$ over range-checked fixed-point integers with
$(A,C,\rho_{\mathrm{com}})$ the committed preimage, and the challenge is derived from the
commitment (\cref{thm:protocol}). Passing certifies the \emph{per-factor} typed relation
$A = C\Mdep$ (not merely dense-update inertness) because $B$ is not in the circuit and
cannot cancel an unconfined component of $A$. Because $A$ is serialized at the \emph{product}
scale (the input and parameter scales are set equal, so $A$'s entries carry the two
fractional-bit blocks of $C\Mdep$), the checked relation is the exact scale-balanced integer
identity $A_Z = C_Z\,(\Mdep)_Z$ with no residual scale factor;
the no-wrap range checks (below) keep every accumulated product inside the signed range, so a
field zero is an integer zero.
Crucially, $\Mdep$ never enters the circuit as prover-supplied data: the verifier
recomputes $m$ from the public monitor in the clear, so a prover cannot substitute a
smaller monitor (the deficient-basis attack surface of the zero-product form,
\cref{thm:spanning}, does not arise).

\paragraph{Commit-and-prove soundness.} Composing the standard primitives gives the
end-to-end guarantee. Fix a public statement $\mathsf{st}$ comprising the base-model
hash; $\Mdep$ or its digest; the package root $h_{\mathrm{pkg}}$ (computed before proving,
\cref{sec:contract}); the dimensions $\dmodel$, $\dhead$, and $r$; the fixed-point
scale and signed range; the circuit version; and the layer/head/tensor identifier.
The publisher publishes the hiding commitment $\commit$,
and the challenge is $\zkProbes$ independent probe vectors $r_j =
H(\textsf{challenge-domain},\mathsf{st},\commit,j)$, probe $j$ expanded from its own
domain-separated random-oracle seed, its coordinates successive rejection-sampled outputs
uniform over the bounded integer range $S$ ($\lvert S\rvert \approx 2^{\zkProbeSetBits}$, bounded so
every scaled product stays inside the circuit's exact no-wrap range: each intermediate
obeys a checked bound, and the per-intermediate ledger, worst margin $\nwMarginWorst$ bits
below $p/2$ on the measured configuration, is in \cref{sec:contract}; larger probes
overflow the range and break the identity), fixed \emph{after} $\commit$ so the
committed factors cannot be chosen against a known probe. Soundness then amplifies with the probe
count $\zkProbes$ rather than the per-probe range.
\begin{theorem}[Commit-and-prove soundness]
\label{thm:protocol}
Under commitment binding ($\eps_{\mathrm{bind}}$), SNARK knowledge soundness
($\eps_{\mathrm{KS}}$), the in-circuit range checks (the no-wrap condition below), and the
random-oracle challenge, an accepting proof implies that the read factor committed under
$\commit$ is monitor-typed ($A = C'\Mdep$
for some $C'$, an exact identity on the encoding lattice), hence the update $BA$
formed with \emph{any} write factor $B$ is inert, except with probability at most
$\eps_{\mathrm{bind}} + \eps_{\mathrm{KS}} + q_H\,\lvert S\rvert^{-\zkProbes}$ for a
$q_H$-query prover. Under commitment \emph{hiding} (the salted hash modeled in the
random-oracle model; a Pedersen or blinded-KZG commitment gives the standard
computational-hiding statement) and SNARK zero knowledge, the transcript, $\mathsf{st}$ together with the hiding commitment $\commit$ and the package root
$h_{\mathrm{pkg}}$, both of which the verifier already holds, reveals nothing further about
$(A,C)$; the write factor $B$ is publicly shipped, hash-bound by $h_{\mathrm{pkg}}$,
and outside the hiding claim.
\end{theorem}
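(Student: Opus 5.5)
The argument is a hybrid over three failure events: the prover breaks knowledge soundness, breaks binding, or gets lucky on the probes. On the complement, the typed identity holds exactly.

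\textbf{Step 1: extraction.} From an accepting proof, SNARK knowledge soundness gives an extractor that outputs a witness $(A,C',\rho_{\mathrm{com}})$, except with probability $\eps_{\mathrm{KS}}$. The witness satisfies the circuit relation: $(A,C',\rho_{\mathrm{com}})$ hashes to $\commit$, the range checks hold, and
\[
(A - C'\Mdep)\,r_j = 0 \quad\text{over }\Fp \text{ for every } j \le \zkProbes .
\]
Binding then forces this to be the pair the commitment actually fixes. If the extracted opening differed from some other valid opening of $\commit$, the two would form a binding break; this costs $\eps_{\mathrm{bind}}$.

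\textbf{Step 2: from the field to the integers.} The no-wrap ledger bounds every accumulated product by less than $p/2$ in absolute value. A field zero of such a bounded signed quantity is therefore an integer zero. So $D := A_Z - C'_Z(\Mdep)_Z$ satisfies $D r_j = 0$ over $\mathbb{Z}$ for every probe.

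\textbf{Step 3: the probe-guessing term.} This is the crux. The challenges are $r_j = H(\textsf{challenge-domain},\mathsf{st},\commit,j)$, so they are fixed only after $\commit$. Binding means the factors are fixed once $\commit$ is fixed. I model $H$ as a random oracle and count the prover's $q_H$ queries. Each query on a candidate $(\mathsf{st},\commit)$ yields fresh uniform probes over $S^{\dmodel}$, independent across the $\zkProbes$ seeds.

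Fix a query with $D \neq 0$, and take a nonzero row $d$ of $D$ with a nonzero coordinate $d_i$. Condition on all coordinates of $r_j$ except the $i$-th. At most one value of $(r_j)_i$ solves $d\cdot r_j = 0$, so the probability is at most $|S|^{-1}$ per probe. Independence across probes gives at most $|S|^{-\zkProbes}$ for that query. A union bound over the $q_H$ queries gives $q_H|S|^{-\zkProbes}$.

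The delicate point is excluding grinding that adapts $A$ to the probes. Binding handles this: any $A$ chosen after seeing the probes must sit under a \emph{different} commitment, which is a different oracle query. Those queries are already counted in the union bound.

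\textbf{Step 4: conclusion.} Off the three bad events, $D = 0$, i.e.\ $A = C'\Mdep$ exactly on the encoding lattice. \Cref{thm:typed} (equivalently \cref{thm:inertness}(b)) then gives inertness of $BA$ for every write factor $B$. Summing the three bad events yields the stated error $\eps_{\mathrm{bind}} + \eps_{\mathrm{KS}} + q_H|S|^{-\zkProbes}$.

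\textbf{Zero knowledge.} The simulator uses the SNARK's zero-knowledge simulator on the public statement. The transcript consists of $\mathsf{st}$, $\commit$, $h_{\mathrm{pkg}}$, the challenges and the proof. The challenges $r_j$ and the values $m = \Mdep r_j$ are computable from public data alone. By commitment hiding (random-oracle salted hash, or Pedersen/blinded KZG), $\commit$ is indistinguishable from a commitment to zeros. The proof itself is simulatable by SNARK zero knowledge.

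$B$ is explicitly outside the hiding claim. It enters only through $h_{\mathrm{pkg}}$, which the verifier already holds, so it adds nothing about $(A,C)$.

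I expect Step 3 to be the main obstacle, specifically stating cleanly that grinding over commitments only buys $q_H$ attempts. Step 2 is routine given the ledger of \cref{sec:contract}.
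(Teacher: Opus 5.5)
Your proposal is correct and follows essentially the paper's route. The paper also composes SNARK extraction, commitment binding of the checked witness to $\commit$, and the in-circuit no-wrap range checks (a field zero is an integer zero) with the all-probes bound of \cref{lem:allprobes}, whose proof is your Step 3: condition on all but one coordinate at a nonzero entry of a nonzero row, use independence across the $\zkProbes$ probes, and union-bound over the $q_H$ oracle queries. It then invokes \cref{thm:typed} for inertness of $BA$, and derives the hiding half, as you do, from commitment hiding plus SNARK zero knowledge with $r_j$ and $m=\Mdep r_j$ recomputable from public data.
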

Three levels must not be conflated: (i) the \emph{algebraic} identity and its all-probes
form (\cref{lem:allprobes}); (ii) the deployed circuit probes the \emph{committed}
identity at $\zkProbes$ post-commitment challenges, false-accepting with probability at
most $\lvert S\rvert^{-\zkProbes}\approx 2^{-\zkSoundnessBits}$ plus the
$q_H\,\lvert S\rvert^{-\zkProbes}$ grinding term (\cref{lem:allprobes}); a field-native realization
with a uniform-$\Fp$ challenge (or $\Mdep$ baked into a custom \texttt{halo2-lib}
verification key, checking the identity in full) sharpens the per-certificate term to $1/p$
at the same proof size;
(iii) knowledge soundness and zero knowledge are the SNARK's, and the commit-and-prove
link binds the checked witness to $\commit$. The no-wrap condition (every accumulated
fixed-point product stays within the signed range of $\Fp$, so a field zero is an integer
zero) is enforced in-circuit by the range checks, a checked constraint rather than an
assumption.

\paragraph{Proof system.} The predicate is fixed-point linear algebra plus a
prover-supplied witness, so a Plonkish system with lookup arguments~\cite{halo2} fits:
native lookups make the dominant cost (fixed-point range checks) cheap, a universal
structured reference string is reused across adapter shapes (one trusted setup; a
per-circuit Groth16 ceremony~\cite{groth16} run by the publisher alone would leave the
publisher holding the circuit trapdoor, with which it could forge proofs for that
circuit), and the constant-size proof with fast verifier suits an attestation posted beside
a checkpoint. The commit-and-prove link binding the predicate to $\commit$ follows the
modular CP-SNARK line~\cite{legosnark2019,lunar2021,artemis2024}; reducing a kernel
statement to a witness identity is folklore~\cite{dumas2014certificates}; the
secret-artifact-versus-public-specification structure mirrors ZK-CEC~\cite{shen2026zkcec}
and the ZK-UNSAT line~\cite{luo2022zkunsat};
and the fixed-point-after-matmul soundness follows Range-Arithmetic~\cite{rangearith2025}.
Because the certificate is \emph{matmul-only} (a matrix product and a subtraction, with no
rank or kernel operator), it is directly expressible as an ONNX graph, so a standard
Halo2-backed ZKML compiler (EZKL~\cite{ezkl}) proves it as-is; we use this for the
feasibility benchmark (\cref{sec:eval}), with a custom \texttt{halo2-lib} circuit the
later optimization.

\paragraph{Verifier preprocessing.} The ``hard part is public'' argument has a concrete
cost: constructing $\Mdep$ from the base (weight extraction, restricted SVD, lattice
quantization, challenge $m=\Mdep r$) takes $\preprocGptMs$\,ms per layer on GPT-2
($\dmodel=768$), and $\preprocQwenFiveMs$--$\preprocSmolMs$\,ms across the four models, dominated
by the restricted SVD; model loading adds $1$--$4$\,s, amortized across layers and sessions.

\paragraph{Realized prover.} We implement the certificate end-to-end on a Halo2 + KZG
backend: a complete proof that a GPT-2-scale value adapter satisfies the typed identity
proves in seconds with a millisecond verifier (measured in \cref{sec:eval},
\cref{tab:cost}); the $O(r(\dhead+\dmodel))$ secret and single
small matrix identity (\cref{sec:certificate}) are what make this cheap. The
commit-and-prove binding uses the \emph{hiding} commitment above: the factors and private
salt $\rho_{\mathrm{com}}$ are committed by an in-circuit Poseidon hash exposed as a
public instance, so one verification checks the certificate \emph{and} that it was
computed with the adapter committed as $\commit$; under Poseidon's collision resistance
no efficient publisher finds a colliding adapter (the binding property), while the
commitment's \emph{hiding} (that $\commit$ reveals nothing about the factors beyond
$\mathsf{st}$) is a separate assumption on the salted Poseidon commitment, for which the
high-entropy salt $\rho_{\mathrm{com}}$ supplies the required preimage entropy; collision
resistance alone does not give hiding. The in-circuit hash dominates the binding cost
(a one-time publisher cost, \cref{sec:eval}), with a KZG polynomial
commitment~\cite{kzg2010} or a custom \texttt{halo2-lib} circuit the remaining speedup.

\subsection{The system contract}
\label{sec:systemcontract}

Two verifier-side procedures close the loop; \cref{sec:contract} gives the full protocol
interface. $\mathsf{VerifyBundle}$ runs the per-layer check on all $L$ layers and recomputes
the package root $h_{\mathrm{pkg}}$, which binds each layer's shipped write factor $H(B_\ell)$
together with that layer's proof commitment $\commit_\ell$. $\mathsf{ServeGuard}(\tilde
A,C,\Mdep)$ is run by the weight-holding consumer on the bytes it actually serves: it admits
$\tilde A$ only if $\tilde A$ opens the commitment as the committed typed identity $\tilde A =
C\Mdep$, checked \emph{exactly over the integer lattice}, with $\Mdep$ recomputed from the
authenticated base rather than taken from a shipped matrix (closing the deficient-monitor
substitution). The check is exact and unconditional: no prime, no field reduction, no
acceptance threshold, hence no slack an unbounded write factor could amplify, and no
cryptographic assumption on the check itself. The composition of the two is the system's
contract.

\begin{theorem}[Two-part confinement: bundle plus guard]
\label{thm:endtoend}
Fix the authenticated base and the deployed monitors $\Mdep[\ell]$; write
$\mathsf{ServeGuard}_\ell$ for the guard run with $(W_0,\mathsf{policy},\commit_\ell)$ on the
served bytes and sidecar of layer $\ell$.
\emph{(a) Served integrity (exact, unconditional):} if
$\mathsf{ServeGuard}_\ell=1$ for the adapter bytes presented at every
protected layer, then the canonical tensors $\tilde A_\ell$ the guard returns satisfy
$\forall\ell,\ \forall\delta\in\ker\Mdep[\ell],\
B_\ell\tilde A_\ell\delta=0$, and the served residual on each blind subspace is the
authenticated public base floor (\cref{lem:served-floor}).
\emph{(b) Committed-package soundness (computational):} if $\mathsf{VerifyBundle}=1$, then
except with probability $L\,(\eps_{\mathrm{bind}}+\eps_{\mathrm{KS}}+q_H\lvert
S\rvert^{-\zkProbes})$, the adapter committed under $h_{\mathrm{pkg}}$ factors through the
deployed monitor at every layer (\cref{thm:protocol}).
\end{theorem}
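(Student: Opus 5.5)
The two parts are proved separately, and they share no argument beyond the per-layer algebra of \cref{thm:typed}. Part (a) is purely algebraic and carries no cryptography. Part (b) is a union bound over the $L$ invocations of \cref{thm:protocol}.

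\emph{Part (a).} Fix a protected layer $\ell$. By the definition of $\mathsf{ServeGuard}_\ell$ in \cref{sec:systemcontract}, acceptance means three things. First, the guard recomputed $\Mdep[\ell]$ from the authenticated base $W_0$ and the declared policy, not from shipped bytes. Second, the canonical tensor $\tilde A_\ell$ parsed from the served bytes, together with the sidecar witness $C_\ell$, opens $\commit_\ell$. Third, the integer identity $\tilde A_\ell = C_\ell\Mdep[\ell]$ holds exactly on the encoding lattice. Because this check is performed over $\mathbb{Z}$ with no field reduction and no threshold, the identity also holds over $\Real$. Then for every $\delta\in\ker\Mdep[\ell]$ we have $\tilde A_\ell\delta = C_\ell\Mdep[\ell]\delta = 0$, and hence $B_\ell\tilde A_\ell\delta = 0$ for whatever $B_\ell$ is served. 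This is \cref{thm:typed}, or equivalently \cref{thm:inertness}(b) read right-to-left. Taking a basis $\Nbasis_\ell$ of $\ker\Mdep[\ell]$ gives $\Valupd\Nbasis_\ell = 0$. \cref{lem:served-floor} then yields $\Valsrv\Nbasis_\ell = \Valbase\Nbasis_\ell$, where $\Valbase$ is computed from the authenticated $W_0$; that is the public floor. Quantifying over $\ell$ finishes (a).

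The one point to state carefully is that ``the served residual'' refers to the tensors the guard returns. The bridge from the guard's output to what the inference kernel actually executes is the external obligation marked by the dashed edge in \cref{fig:pipeline}, and I would say explicitly that (a) does not cover it.

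\emph{Part (b).} Acceptance of $\mathsf{VerifyBundle}$ means two things. First, the recomputed package root equals $h_{\mathrm{pkg}}$, which fixes for each layer the pair $(H(B_\ell),\commit_\ell)$. Second, each layer's proof verifies against a statement $\mathsf{st}_\ell$ that contains $h_{\mathrm{pkg}}$, the digest of $\Mdep[\ell]$, and the layer identifier. For each $\ell$ let $E_\ell$ be the event that layer $\ell$'s proof verifies while no $C'$ with $A_\ell = C'\Mdep[\ell]$ exists for the $A_\ell$ committed under $\commit_\ell$. \cref{thm:protocol} gives $\Pr[E_\ell]\le \eps_{\mathrm{bind}}+\eps_{\mathrm{KS}}+q_H|S|^{-\zkProbes}$. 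A union bound over $\ell=1,\dots,L$ then gives the stated $L(\cdot)$ bound. On the complement of $\bigcup_\ell E_\ell$, every committed read factor is monitor-typed, so the committed package factors through the deployed monitor at every layer.

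The main obstacle is justifying that the $L$ per-layer applications of \cref{thm:protocol} compose. The risks are cross-layer replay (reusing one layer's proof or commitment at another) and a malicious prover adaptively choosing its challenges across layers. My first move is domain separation. The layer identifier and $h_{\mathrm{pkg}}$ are part of $\mathsf{st}_\ell$, and the probes are $r_j = H(\textsf{challenge-domain},\mathsf{st}_\ell,\commit_\ell,j)$. So each layer's random-oracle queries form a disjoint family, and a proof valid for $\mathsf{st}_\ell$ cannot verify for $\mathsf{st}_{\ell'}$. A prover's total budget of $q_H$ queries then bounds the queries spent on each layer, so each per-layer grinding term is at most $q_H|S|^{-\zkProbes}$. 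Summing these terms is conservative but valid. The binding and knowledge-soundness terms need the extractor to be invocable per layer; I would assume this via a single extractor for the universal SRS, applied to each accepted proof, so that failure probabilities again add.
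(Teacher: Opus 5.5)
Your proposal is correct and follows the paper's own argument. Part (a) composes \cref{thm:typed} at each admitted layer with \cref{lem:served-floor}, using the exactness of the guard's integer identity check, and part (b) applies \cref{thm:protocol} per layer under a union bound over the $L$ certificates, with the package root binding each proof's own commitment $\commit_\ell$; your added detail on domain separation and per-layer extraction fills in what the paper leaves implicit without changing the route.
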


\begin{proof}[Proof idea]
(a) composes \cref{thm:typed} at each admitted layer with \cref{lem:served-floor}; the guard's
integer identity check is exact, so no slack enters. (b) applies
\cref{thm:protocol} per layer
under a union bound over the $L$ certificates; the package root binds the proof's own
commitment, so no unlinked commitment splits the attested and shipped factors.
\end{proof}

\noindent The two parts have deliberately distinct reach, and conflating them overstates the
guarantee. Typed-relation soundness binds the \emph{committed} adapter: the proof certifies
that the $A$ in $\commit$, the \emph{same} $\commit$ the package root binds, is typed, which is
what a remote verifier who never sees the read factor obtains. Because one commitment serves both
the proof and $h_{\mathrm{pkg}}$, there is no unlinked second commitment a publisher could
split. What it does \emph{not} do is bind the deployed \emph{bytes} to $\commit$: a publisher
could ship an $A'$ that does not open $\commit$. Catching that is $\mathsf{ServeGuard}$, which
opens $\commit$ against the served $(A',C,\rho)$ and rejects any other factor, untyped or typed
but not the committed one, whatever a malicious publisher or intermediary shipped. The two
conclusions have different strengths: exact integer typing gives \emph{unconditional}
confinement of the admitted tensor, while equality with the previously certified factor is
\emph{computational}, under commitment binding, exactly the split of \cref{thm:endtoend}.
$\mathsf{ServeGuard}$ certifies the \emph{tensor it is handed}; the executed-tensor binding
is scoped in \cref{sec:scope}.

A publisher who needs the full update $\Valupd = BA$ confidential needs a
different realization; what this one buys is that the confinement property is checkable by a
verifier who never sees $A$.

\section{Evaluation}
\label{sec:eval}

The previous sections constructed the defense, proved its soundness, and described a
zero-knowledge realization. This section asks whether it works in practice: does the
certificate decide correctly, does confinement cost utility, does the prover scale, and
does the defense hold against an adaptive adversary? All numbers are macro-guarded from
the released scripts.

\paragraph{Falsification suite.} The evaluation escalates through an adversary hierarchy:
naive hidden edit, published stealth edit, certificate-aware attacker, package
substitution, served-factor substitution. First, the certificate is tested against a suite
of structural attacks: a confined value map is accepted; an OBVS construction is rejected
($\gptSeparationOrders$ orders of separation on GPT-2); a partly visible backdoor is
rejected at every mixing coefficient; an adaptive payload anywhere in $\ker\pip$ is caught
by a spanning basis (\cref{thm:spanning}); and the tolerance bound is verified numerically.
A magnitude/rank/sparsity bound~\cite{shangchen2026drift} calibrated to accept an honest
adapter of the same norm and rank also accepts an OBVS adapter; the confinement certificate
rejects it (residual $\driftObvsBlindResid$).

\paragraph{Adaptive adversary and published attacks.}
SteerEdit~\cite{steeredit2026} compiles activation steering into a persistent
null-space-constrained weight edit; instantiated on the value path, its stealthy trigger
is $\steerStealthConcealed\%$ inside $\ker\pip$ and the certificate ZK-rejects it
($\steerZkSepOrders$ orders from honest); evading the certificate forces visibility
($\steerNaiveVisibility\!\to\!\steerEvadeVisibility$). Our re-implementation of the
supervised spectral detector of~\cite{puertolas2026weightspace} achieves ROC-AUC $\detAucUnit$ on loud payloads
but degrades to $\detAucStealth$ under stealth and needs a labeled reference; the
certificate is \emph{label-free} (its reference is the verifier-declared public monitor)
and exact. A defense-specific adaptive attacker targets
the certificate's residual budget: the optimal residual attacker saturates
$\sigma_{k+1}(W_V)$, capped below the unconfined $\sigma_1$ by
$\advResCapLo$--$\advResCapHi$ orders as coverage rises, and driven to zero by exact
confinement.

\paragraph{Detection from weights is uninformative.}
On GPT-2 ($\dim\ker\pip=\gptBlindDim$ of $\gptDmodel$), a trained LoRA places
$\dfourTrainedBlind\%$ of its value-path norm in $\ker\pip$; a random adapter,
$\dfourRandomBlind\%$; only an adapter built inert by hand sits at the floor; any low-rank
adapter concentrates there by geometry, so no kernel-activity test separates honest from
malicious. A real published adapter (Stanford-Alpaca \texttt{gpt2\_alpaca-lora},
$\realAdapterBlindFrac\%$ kernel-active) confirms it; confining and then tampering separates
at $\realAdapterSepOrders$ orders.

\paragraph{A targeted operator-invisible manipulation, caught and removed.} The channel's
demonstrable use is an integrity manipulation of the output, and the certificate removes
it. We build a value-path payload that reads $u\in\ker\pip$ (so $W_Q u = W_K u = 0$) and
writes toward a chosen token's unembedding direction, shifting its logit by
$\intManipVsBase\times$ the base value path's natural variation at visibility $0$. The
certificate rejects the tampered adapter (residual $\intTamperResid$); confining the
released update accepts it (residual $\intConfinedResid$) and drives the manipulation to
zero. This is an integrity channel (\cref{thm:readout} bounds the shift), not a
demonstrated content backdoor.

\paragraph{The blind subspace is architecture-dependent.} Under MHA, heads' visible
channels sit in general position (overlap $\covMhaOverlap$); under grouped-query attention
(GQA)~\cite{ainslie2023gqa}, key sharing collapses a group's joint visible channel to a
fixed low dimension, a measured $\covGqaBlindShared$-dimensional group-shared blind floor on
Qwen2.5-0.5B; the covering penalty is entirely off-circuit (prover cost is flat in blind
dimension). We report the measured frontier; the general group-level observation geometry
is outside this paper.

\paragraph{Inert-by-design is nearly free, and utility does not select the monitor.} Weight-space
detection fails because a benign adapter \emph{uses} $\ker\pip$; the defense is cheap because it does
not \emph{need} it. We train value-path LoRAs on an instruction task
(Qwen2.5-0.5B-Instruct~\cite{qwen2025}, held-out response perplexity), confined to a
declared monitor's visible channel (the deployable construction, trained-in rather than
projected after) in a factorial design over the three monitors (gain, the augmented QK
policy, and a random control), three visible budgets, two ranks, and $\utilSeeds$ seeds
(\cref{tab:doe}). Confinement is not a no-op: it forces $\Valupd\Nbasis = 0$ (the exact
update certificate of \cref{prop:served}(a)), killing the channel the publisher
could add.
Yet task utility survives, and, the design's decisive finding, survives \emph{equally under
every monitor}: at $25\%$ visible (three-quarters of the residual stream blind), all three
recover $\doeRecovRand$--$\doeRecovGain\%$ of the unconfined adapter's gain (standard
deviation $\le\doeRecovStdMax\%$ over $\utilSeeds$ seeds, \cref{tab:doe}). Matched within each budget--rank--seed cell, the largest monitor-pair
difference is $\doeMatchedMaxDiff\%$, a two-one-sided-test equivalence at
$\pm\doeTostMargin\%$ over $\doeMatchedPairs$ pairs, far tighter than the security separation
below. In these experiments, low-rank benign adaptation re-expresses across every
sufficiently large subspace tested.

\paragraph{But the certified residual does select the monitor.} Utility being
monitor-agnostic is exactly what makes the security argument sharp: the monitors differ not
in trainability but in the \emph{certified residual} they leave. At the same $25\%$-visible
operating point the gain monitor's residual is the public floor $\sigma_{k+1}(W_V)=0$ (exact
closure on this GQA value path), the augmented-QK residual is $\doeResidQkaug\%$ of the
unconfined blind read, and the \emph{random} monitor leaves $\doeResidRand\%$, and still $\doeResidRandDeep\%$ even at $75\%$ visible
(\cref{tab:doe}). Two hypotheses separate cleanly: low-rank fine-tuning repacks into many
subspaces (so utility is flexible), but only \emph{gain-aware} selection minimizes the
worst-case residual (so security is not). In the factorial design, monitor identity
explains $\doeEtaUtil\%$ of the variance in utility recovery but $\doeEtaResid\%$ of the
variance in certified residual ($\eta^2$, one-way by monitor); the same factor that is
irrelevant for trainability is dominant for security. The gain co-design earns its place on the residual,
not on training cost, and a defender who picks the monitor for utility alone, or at random,
gets a certificate that certifies almost nothing. A second task
(Dolly-15k~\cite{dolly2023}, open-domain instruction) replicates the separation:
$\dollyRecovGain$--$\dollyRecovRand\%$ utility recovery across the three monitors at $25\%$
visible, while the certified residual fractions are again $\dollyResidGain$, $\dollyResidQkaug$,
and $\dollyResidRand\%$ ($\eta^2$: $\dollyEtaUtil\%$ utility, $\dollyEtaResid\%$ residual), so
the conclusion is not an artifact of one task.

\begin{table*}[t]
\centering\small
\setlength{\tabcolsep}{4pt}
\begin{minipage}[t]{0.46\textwidth}
\centering
\textbf{(a) Safety--utility DoE}\\[3pt]
\begin{tabular}{@{}lccc@{}}
\toprule
\textbf{Monitor} & \textbf{vis.} & \textbf{util.\ rec.} & \textbf{resid.} \\
\midrule
gain    & .25 & .991 & 0    \\
gain    & .50 & .988 & 0    \\
gain    & .75 & .996 & 0    \\
qk-aug  & .25 & .984 & .43  \\
qk-aug  & .50 & .985 & 0    \\
qk-aug  & .75 & .994 & 0    \\
random  & .25 & .984 & .89  \\
random  & .50 & .985 & .76  \\
random  & .75 & .997 & .60  \\
\bottomrule
\end{tabular}
\phantomsubcaption\label{tab:doe}
\end{minipage}%
\hfill
\begin{minipage}[t]{0.52\textwidth}
\centering
\textbf{(b) Augmentation frontier}\\[3pt]
\begin{tabular}{@{}lrrrr@{}}
\toprule
\textbf{Model} & $b$ & $k_{0.1}$ & $k_{0.01}$ & $\kappa_{0.1}$ \\
\midrule
GPT-2 (MHA)      & 640  & $533^{561}_{516}$ & $638^{638}_{638}$ & 83\% \\
SmolLM2 (MHA)    & 1920 & $1518^{1588}_{1456}$ & $1914^{1918}_{1910}$ & 79\% \\
\addlinespace[2pt]
Qwen-0.5B (GQA)  & 768  & $128^{128}_{128}$ & $128^{128}_{128}$ & 17\% \\
Qwen-1.5B (GQA)  & 1280 & $256^{256}_{256}$ & $256^{256}_{256}$ & 20\% \\
Qwen-3B (GQA)    & 1792 & $256^{256}_{256}$ & $256^{256}_{256}$ & 14\% \\
Qwen-7B (GQA)    & 3328 & $512^{512}_{512}$ & $512^{512}_{512}$ & 15\% \\
Llama-1B (GQA)   & 1920 & $512^{512}_{511}$ & $512^{512}_{512}$ & 27\% \\
Llama-3B (GQA)   & 2816 & $1024^{1024}_{1023}$ & $1024^{1024}_{1024}$ & 36\% \\
\bottomrule
\end{tabular}

\phantomsubcaption\label{tab:frontier}
\end{minipage}

\vspace{6pt}
\begin{minipage}[t]{0.55\textwidth}
\centering
\textbf{(c) Monitor cost and residual ($\approx\!\frac{1}{3}$ blind)}\\[3pt]
\setlength{\tabcolsep}{2.5pt}%
\begin{tabular}{@{}lrrcccc@{}}
\toprule
\textbf{Model} & $\dmodel$ & $\dim\Val$ & gain$_u$ & QK$_u$ & gain$_\sigma$ & QK$_\sigma$ \\
\midrule
GPT-2 (MHA)        & 768  & 768  & .08 & .29 & 0.64 & 3.16 \\
Qwen-0.5B (GQA) & 896  & 128  & 0   & .32 & 0    & 0.34 \\
SmolLM2 (MHA) & 2048 & 2048 & .21 & .72 & 0.76 & 3.64 \\
Qwen-7B (GQA)   & 3584 & 512  & 0   & .08 & 0    & 0.30 \\
\bottomrule
\end{tabular}
\phantomsubcaption\label{tab:multimodel}
\end{minipage}%
\hfill
\begin{minipage}[t]{0.43\textwidth}
\centering
\textbf{(d) Prover cost (Halo2/KZG, BN254)}\\[3pt]
\setlength{\tabcolsep}{3pt}%
\begin{tabular}{@{}lrrr@{}}
\toprule
\textbf{Configuration} & \textbf{Prove} & \textbf{Verify} & \textbf{Proof} \\
\midrule
Typed, full-layer    & $\zkTypedProveLayer$\,s & $\zkTypedVerifyMs$\,ms & $\zkTypedProofKB$\,KB \\
Zero-prod., per-head & $\zkProveHead$\,s & $\zkVerifyHeadMs$\,ms & $\zkProofHeadKB$\,KB \\
Zero-prod., layer    & $\zkProveLayer$\,s & $\zkVerifyMs$\,ms & $\zkProofLayerKB$\,KB \\
Salted commit        & $\zkSaltCommitProve$\,s & $\zkSaltCommitVerifyMs$\,ms & $\zkSaltCommitProofKB$\,KB \\
\bottomrule
\end{tabular}
\phantomsubcaption\label{tab:cost}
\end{minipage}

\caption{Empirical results. \textbf{(a)} Utility recovery (mean, $\utilSeeds$ seeds, rank
$16$) is high and monitor-agnostic; certified residual separates monitors (exact $0$ for
gain, large for random). \textbf{(b)} Monitoring complexity $k_\tau$ =
median$^{q_{75}}_{q_{25}}$ directions for $\sigma_{q+1}/\sigma_1\le\tau$;
$\kappa_{0.1}=k_{0.1}/b$. GQA saturates at $\dim\Val$; MHA needs most of $b$.
\textbf{(c)} Confinement disruption $u$ (the artifact's relative value-path output-change
proxy; not task utility, which \textbf{(a)} measures) and worst-case residual
($\sigma_{k+1}$) at $\approx\!\frac{1}{3}$ blind. Gain is tighter on every model and exact ($0$) on GQA.
\textbf{(d)} Cost is near-flat in rank ($r\!\le\!32$); commit-and-prove hash dominates.}
\label{tab:results}
\end{table*}

\paragraph{The deployed policy: QK plus its optimal in-kernel augmentation.} The deployed
monitor expands the QK operator's coverage within $\ker\pip$ via the
in-kernel augmentation of \cref{thm:augment}, whose ideal-design residual is
$\sigma_{q+1}(\Valbase\Nbasis_\pip)$,
the base value path's gain on the directions actually hidden from that single-head QK monitor
(its exact $\ker\pip$, distinct from the visible-dimension sweep of \cref{tab:multimodel}). On GPT-2 that
gain is $\qkFlagshipResid$ un-augmented and the base path is \emph{full-rank} on $\ker\pip$ (rank
$\qkKerpiRank$; the budget is reader-relative, and this is the \emph{whole-layer} value
reader $\Valbase$, while a single head's value reader caps its budget at the head
dimension $\dhead$), so augmenting $\qkAugDirs$ optimal in-kernel directions cuts it only to
$\qkAugResid$: closing the QK channel is a bounded-residual tradeoff, not free.

\paragraph{The augmentation frontier is architectural, and tracks $n_{kv}$.} The
$\qkFlagshipResid\!\to\!\qkAugResid$ point is one head; the \emph{ideal} frontier
$\sigma_{q+1}(\Valbase\Nbasis_\pip)$ is exact (\cref{thm:augment}; the deployed floor is
its fixed-point compilation's kernel restriction, itself public), so we compute it for
\emph{every} layer and query head of \frontierModels\ checkpoints, spanning two attention
layouts, four families, $0.1$--$7$B, and $Q\!:\!KV$ from $3\!:\!1$ to $8\!:\!1$. We report the
\emph{monitoring complexity} $k_\tau=\min\{q:\sigma_{q+1}/\sigma_1\le\tau\}$, the budget for a
$1/\tau$ attenuation of the base blind read (\cref{tab:frontier}, \cref{fig:frontier}).

The split is architectural and sharp. On the multi-head-attention paths the frontier is
high-rank: the median head needs $\frontierGptKtenMed$ of $\gptBlindDim$ directions on GPT-2
for a $10\times$ attenuation, $100\times$ is unreachable for most heads
($\frontierGptKhundredReach\%$ attain it), and $\kappa_{0.1}=k_{0.1}/b$ is
$\frontierGptKappaPercent\%$ (GPT-2) and $\frontierSmolKappaPercent\%$ (SmolLM2). On every
grouped-query path it collapses at that model's own value-path rank $n_{kv}\dhead$,
with interquartile range zero. That ceiling is algebraic ($\Valbase\Nbasis_\pip$ has at most
$n_{kv}\dhead$ nonzero singular values); what the sweep adds is that it is
\emph{attained} in all six cases, so no cheaper-than-rank ideal monitor exists, and that
$k_{0.1}=k_{0.01}$ there. The fraction is therefore a function of $Q\!:\!KV$, not a constant of
the architecture class: Llama-3.2 at $3\!:\!1$ and $4\!:\!1$ costs
$\frontierLlamaThreeKappaPercent\%$ and $\frontierLlamaOneKappaPercent\%$ where Qwen2.5 at
$6\!:\!1$--$8\!:\!1$ costs $\frontierQwenThreeKappaPercent$--$\frontierQwenOneFiveKappaPercent\%$;
a single-family sweep reports this as one GQA number.

\paragraph{The gain-oracle benchmark, across architectures.} How far could monitor--value
co-design go if coverage were unconstrained? Aligning the
monitor's visible channel to the value path's high-gain subspace strictly dominates QK: the
gain monitor confines at a fraction of QK's disruption cost and bounds the residual
$\gainVsQk\times$ tighter ($\gainResidFactor\times$ below $\sigma_1(W_V)=\wvSigmaMax$ on
GPT-2), on all $\multiModelsCost$ model configurations we test (\cref{tab:multimodel}). Its residual
equals the public base floor $\sigma_{k+1}(W_V)$,
verifier-computable from the base $W_V$ and tunable; a \emph{data}-aligned
monitor confines cheaply but leaves the largest residual, so co-design must target \emph{gain},
not data. Detection stays uninformative on every model (a trained honest LoRA is
$\multiTrainedBlindLo$--$\multiTrainedBlindHi\%$ kernel-active). Grouped-query attention makes
\emph{gain-monitor} closure easier: its value path is inherently low-rank ($\dim\Val =
n_{kv}\dhead = \gqaValueOutDim$), so $\sigma_{k+1}(W_V)$ reaches \emph{zero}: exact closure
against the gain monitor that the high-rank MHA path does not admit. This is closure of the
gain-oracle's channel; the query/key operator's own residual stays nonzero
(\cref{tab:multimodel}), so the two are distinct policies. The pattern holds at deployment
scale: on a 7B GQA model the gain residual is again exactly zero, at single-digit-percent
utility cost, while the operator residual stays nonzero (\cref{tab:multimodel}). The update
infeasibility (\cref{prop:served}(b)) is unchanged.

\paragraph{The prover is cheap (Halo2 + KZG).} We prove the certificate end-to-end on a
Halo2/KZG backend (BN254). The deployed circuit proves the \emph{committed} typed identity
at a post-commitment challenge ($y = Ar - Cm$ with $m = \Mdep\,r$,
\cref{sec:realization}; the committed
factors $A$ and $C$ are the secret witness, and the public output is checked against zero).
The shipped factors are fixed-point-exact ($A = C\Mdep$ on the encoding lattice), so the
honest output is \emph{exactly} zero in the field, not merely small
(\cref{tab:cost}). A tampered adapter that reads a direction hidden from $\Mdep$
yields a nonzero public output ($\zkTypedTamperY$ against the honest exact $0$), so the
verifier's zero-check rejects it. The equivalent zero-product form~\cite{freivalds1979}
$y = B(A(\Nbasis r))$ (used by the named-trigger certificates; same $\zkProbes$-probe
soundness $\approx 2^{-\zkSoundnessBits}$, \cref{lem:allprobes}) measures comparably.
Because the predicate is matmul-only (no execution circuit, no in-circuit rank), it is
cheap (\cref{tab:cost}), and cost is
near-flat in the adapter rank over the measured range: the certificate is the single matrix
product $A - C\Mdep$, whose constraint count is $O(r(\dmodel + \rank\Mdep))$ but which stays
inside the same padded circuit for all $r \le 64$, so measured prover time and proof size rise
only marginally with $r$ (\cref{tab:cost}). A whole checkpoint's \emph{value} adapters are $\pkgRootLayers$ independent per-layer
certificates bound by one package root, which the verifier recomputes and which any swapped
layer changes (\cref{sec:systemcontract}). Across the
$\pkgRootLayers$ GPT-2 value monitors every typed factor passes and every live-channel
substitution is rejected, including the congruent $\tilde A_Z + pE$ attack that defeats a
$\mathrm{GF}(p)$ rank check (a float projection residual, reported only for intuition, separates
honest from substituted by $\serveGuardSepOrders$ orders). Admission is cheap at serving
time: the lattice parse and integer identity take $\guardAdmitMs$\,ms per layer, the
one-time commitment-opening recomputation $\guardOpenS$\,s per layer at load, and the
private sidecar is $\sidecarKB$\,KB per layer. Measured on GPT-2, the salted
commit-and-prove package proves in $\pkgCommitProveS$\,s (plain $\pkgPlainProveS$\,s), verifies in
$\pkgCommitVerifyS$\,s, and ships a $\pkgRootProofKB$\,KB bundle; aggregating the per-layer
commitments into the root costs $\pkgRootCostMs$\,ms, while the in-circuit commitment links
dominate the $\pkgCommitProveS$\,s publication cost. Both the guarantee and these figures cover
the value path, not the Q/K/O or MLP projections; certificate and checkpoint costs sit well
under the minutes a \emph{distilled} GPT-2 inference circuit takes on a comparable
Halo2-based toolchain~\cite{kang2024zkmleurosys}, the trace-proof cost barrier full-model
inference faces~\cite{chantasantitam2026palm}. We also exercise the
\emph{commit-and-prove} binding of \cref{sec:realization} on the measured
commitment~\cite{grassi2021poseidon}: a tampered read factor yields a different $c$
(binding rests on Poseidon's collision resistance; the test exhibits the mechanism, not
the assumption), and re-committing the same adapter under a fresh salt yields a different
$c$, confirming the implemented preimage includes the salt (hiding is the assumption of
\cref{thm:protocol}, not an empirical property this test establishes); the salted per-cert
costs are one-time publisher costs dominated by the in-circuit hash (\cref{tab:cost}). All
measurements are on commodity hardware; the proving environment is in the released artifacts.

\paragraph{Harvested end-to-end certificates.} Two representative public-carrier predicates
are instantiated end-to-end from garak's attack suite~\cite{garak2024} (EZKL/Halo2,
$\garakInjSepOrders$--$\garakGlitchSepOrders$ orders separation between honest and
payload-carrying adapters on GPT-2, $\qwenInjSepOrders$--$\smollmInjSepOrders$ on
Qwen/SmolLM2); the verdict survives int8 quantization and predicts the adapter-mediated
logit effect ($\behavShiftRatio\times$ the confined adapter's shift, \cref{thm:readout}). The
harvested proofs, quantization study, and behavioral validation are in \cref{sec:appendix};
classifying the full attack surface against the cheap/hard boundary is outside this paper.

\section{Related work}
\label{sec:related}

The construction sits at the intersection of four lines of work.

\paragraph{Undetectable backdoors (motivation).} Backdoors can be planted so as to be
provably undetectable from the released weights,
computationally~\cite{goldwasser2022backdoors} or
statistically~\cite{bogdanov2026undetectable}: sparse-hardness and
latent-construction results~\cite{choudhary2026sparseback,eggen2026latentback} and a
detect-to-remove pivot~\cite{goldwasser2025oblivious} reinforce the picture, motivating our
shift from detection to \emph{defense relative to a public, declared monitor}.
Read in reverse, the same algebra is an attack: weight
orthogonalization~\cite{arditi2024refusal} installs $r^{\top}W=0$ against a public refusal
direction $r$.

\paragraph{Reference-based and anomaly-based detection.} PEFTGuard~\cite{peftguard2025} and
per-projection spectral classifiers~\cite{puertolas2026weightspace} learn from labeled
adapter corpora; Neural Cleanse~\cite{neuralcleanse2019} reverse-engineers class-conditioned
candidate triggers; spectral signatures~\cite{spectralsig2018} flag representation-space
outliers in the training data. Each consumes a behavioral, data, or population reference;
absent one, the operator-invisible subclass is provably hard to distinguish, and confinement
supplies the reference structurally.

\paragraph{Structural ZK predicates and execution proofs.} Shang et
al.~\cite{shangchen2026drift} prove norm/rank/sparsity bounds in ZK under the same
publisher--verifier threat model, and FairZK~\cite{fairzk2025} bounds spectral norms for
fairness; an operator-invisible payload passes all such bounds; the object here, a
kernel-confinement relation, is categorically different. The bulk of zkML proves
computational-trace properties~\cite{zkllm2024,zkpot2023,verilora2026}; our certificate
is a single static predicate, not a trace, sidestepping the cost barrier
PAL*M~\cite{chantasantitam2026palm} identifies and the ghost-weights
attack~\cite{hollowllm2026}. Subspace-constrained
adaptation~\cite{subspaceconstrained2026} is a convergent but uncertified empirical
analogue.

\paragraph{Hardware attestation and building blocks.}
PAL*M and Laminator~\cite{laminator2024} root trust in hardware; the audit-gap
analysis~\cite{auditgap2026} formalizes the gap our certificate partially closes. The
construction composes standard
machinery~\cite{kzg2010,legosnark2019,artemis2024,halo2,rangearith2025,dumas2014certificates}
and the ZK-CEC precedent~\cite{shen2026zkcec}. LoRA~\cite{lora2021} is the adapter family;
MasqLoRA~\cite{masqlora2026} and FloatDoor~\cite{floatdoor2026} are live adapter-payload
threats.

\section{Scope, limits, and outlook}
\label{sec:scope}

Each bound below is a theorem or a measured fact.

\paragraph{The base model must be public.} The certificate is cheap because the blind
subspace is a deterministic public function of the base model; a secret base means the
verifier cannot recompute $\Mdep$ in the clear and certification enters the expensive rank
direction (\cref{sec:certificate}). Public and \emph{identified}: the manifest names the
deployed base; relaxing identity to membership in a public candidate set is outside this
paper.

\paragraph{Single-head monitor, extensibly multi-head.} The deployed monitor watches one
head's QK map plus its in-kernel augmentation (\cref{thm:augment}), and the frontier
(\cref{tab:frontier}) is per head; several heads at once is the joint-coverage predicate
(\cref{sec:predicates}), the same typed identity against the stacked monitor, not
evaluated end to end.

\paragraph{Per-layer, not cross-layer.} The certificate confines each layer's adapter within
that layer's monitor. Under an every-layer defense, cross-layer cascades close by the same
rank-nullity argument (\cref{sec:appendix}); under a partial defense, uncertified layers
remain open.

\paragraph{Structural, with a bounded behavioral consequence.} The certificate is a property
of the weights, not a guarantee of behavioral safety: an $\eps$-inert value path shifts any
downstream linear readout by at most $\lVert G\rVert\eps\lVert\delta\rVert$
(\cref{thm:readout}), and a model can pass while unsafe in ways the predicate is silent about.

\paragraph{Serving-endpoint binding.} The certificate and the guard bind the adapter bytes
\emph{admitted} at serving time; binding the admitted tensor to the one the inference kernel
executes is outside the attested boundary, the systems frontier (\cref{sec:systemcontract}).

\paragraph{Knowledge, not just actuation.} Typing restricts the adapter's \emph{read} to
the visible feature $\Mdep x$; it does not constrain the write factor $B$, nor what
behavior is computable from that feature (the full representation is almost surely
injective on prompts~\cite{nikolaou2026injective}, though a low-rank feature need not be).
The boundary is concrete: in the GPT-2 demonstration a directly measured visible contrast
$c^{\top}\Mdep x$ distinguishes the trigger, so a rank-one monitor-typed adapter passes
proof and guard yet is $\knowChanSelectivity\times$ trigger-selective
(\cref{sec:gapcorollary}). This is the joint write-and-read channel named in the outlook, not
a gap in the structural claim (``no operator-invisible channel of this class''); a
knowledge-bounding or behavioral monitor is the complementary defense.

\paragraph{Scope of the empirics.} The augmentation-frontier geometry is measured across
\emph{every} layer and query head of \frontierModels\ checkpoints up to 7B
(\cref{tab:frontier,fig:frontier}), and the task-utility separation replicates across monitor,
budget, rank, and $\utilSeeds$ seeds on two instruction tasks (Alpaca~\cite{taori2023alpaca},
Dolly~\cite{dolly2023}), both modest adaptations on one $0.5$B model.

\paragraph{Outlook.} Monitor-factorized detection is impossible on this channel, and the
reference-free weight statistics we test do not separate benign from malicious use;
construction supplies the declared structural reference that detection lacks. A
proof-carrying adapter is universal
(one static check covers all inputs) and monitor-selective, so monitor quality is what buys
security. What it does not cover, nonlinear, compositional, and data-mediated attacks, is
the frontier: attesting the joint write-and-read channel.

\appendix
\section{Ethics considerations}
\label{sec:ethics}

The construction is defensive: it lets a publisher prove a release free of a known attack
class without disclosing the certified read factor, and lets a consumer or regulator check that
proof. It exposes no new attack. The bounded, named scope of \cref{sec:scope} is part of
the deployment guidance: a passing certificate must be read as the structural,
residual-bounded claim it is, not as a comprehensive safety guarantee.

\providecommand{\serveguardopenscience}{content/open_science_submission}
\section{Open science}
\label{sec:openscience}

Code, formal proofs, and reproduction scripts will be released publicly upon
acceptance for publication. The research artifacts comprise the paper, the
measurement harness, the production pipeline, and the demo. The base models we
measure are public (GPT-2, Qwen2.5-0.5B, SmolLM2-1.7B, Qwen2.5-7B on their
respective hubs); the release will contain code and derived measurements, with
no model weights.

\paragraph{Artifact contents.} (i) The reference semantics and adversarial falsification suite
(clean-accept, OBVS-reject, deficient-basis unsoundness, the defense-specific adaptive
attacker), runnable as a single regression target. (ii) An architecture-agnostic measurement
library that extracts the query/key/value maps of any supported model (fused and separate
projections, multi-head and grouped-query attention) and computes the cost, residual, and
kernel-activity statistics as pure functions, with unit tests. (iii) The end-to-end
zero-knowledge prover (EZKL/Halo2, KZG--BN254), including the commit-and-prove binding and
the garak-harvested certificates. (iv) The Lean~4 development formalizing the soundness
statements of \cref{sec:soundness}, with the paper-to-declaration table of \cref{tab:lean}.

\paragraph{Reproducing the numbers.} Every numeric value in the paper is generated from the
scripts into macro files the paper reads directly, each value carrying a provenance
comment naming its producing script; a documented ordered procedure regenerates the full set (including the
cross-architecture cost table \cref{tab:multimodel} and figure \cref{fig:channel_capacity})
and the canonical CSVs the figures and tables read. The Lean development is sorry-free and
axiom-clean; a committed check enumerates the axioms each cited declaration depends on
(\texttt{propext}, \texttt{Classical.choice}, \texttt{Quot.sound} only). The measurements run
on commodity hardware; the largest model is loaded in reduced precision, and the geometric
statistics are computed in double precision on the extracted weights.

\newcommand{\serveguardartifactstatement}{%
The reference implementation and adversarial suite of \cref{sec:eval} are part of
the research artifact described in \cref{sec:openscience}. The artifact includes
the falsification suite, attack/attestation loop, all measurement scripts, and a
claim-to-test-to-Lean map; the reproduction procedure is in the artifact's playbook.%
}

\section{Machine-checked correspondence}
\label{sec:appendix}

The kernel-algebra and confinement statements of \cref{sec:soundness} are formalized in
Lean~4 (with Mathlib) in a self-contained module. Every cited Lean declaration is sorry-free and
depends only on the standard classical axioms \texttt{propext}, \texttt{Classical.choice},
and \texttt{Quot.sound}. The commit-and-prove and attention-instantiation
results (\cref{thm:protocol,lem:attention}) and the typed encoding bound
(\cref{eq:typedtol}) are standard linear algebra and
cryptographic composition and are \emph{not} mechanized; of the residual-optimality frontier
(\cref{thm:augment}), the exact-certification endpoint (the reader-relative characterization
and its rank lower bound) is mechanized, while the singular-value grading itself is not. The
``machine-checked'' claim is scoped to the algebraic core below. \Cref{tab:lean} maps paper statements to Lean declarations. (The
formalization builds on a general linear-algebra library; the statements and proofs cited
here stand on their own.)

\begin{table}[h]
\centering
\small
\begin{tabular}{@{}ll@{}}
\toprule
\textbf{Paper statement} & \textbf{Lean declaration} \\
\midrule
\eqref{eq:inert} Inertness          & \texttt{Inert} \\
\cref{thm:inertness}(a) Channel-absence & \texttt{attest\_sound} \\
\cref{thm:inertness}(c) Defense           & \texttt{inert\_of\_factor\_through\_visible} \\
\cref{thm:inertness}(b) Factor-through     & \texttt{inert\_iff\_factors\_through} \\
\cref{thm:typed} Typed LoRA          & \texttt{lora\_typed\_inert} \\
\cref{def:served} Served map          & \texttt{served} \\
\cref{lem:served-floor} Base floor    & \texttt{served\_comp\_eq\_base\_of\_update\_inert} \\
\cref{prop:served}(a) Isolation  & \texttt{served\_inert\_iff\_base\_of\_update\_inert} \\
\cref{prop:served}(b) Rank obstruction & \texttt{served\_comp\_ne\_zero\_of\_lowrank} \\
\cref{lem:zeroproduct} Zero-product   & \texttt{inert\_iff\_zero\_product} \\
Joint coverage (cheap dual)           & \texttt{joint\_inert\_iff} \\
\cref{thm:spanning} Deficient-basis counterexample & \texttt{spanning\_necessary} \\
\cref{thm:augment} zero-residual endpoint (certifies iff) & \texttt{readerRelative\_certifies\_iff} \\
\cref{thm:augment} endpoint rank floor & \texttt{readerRelative\_rank\_ge} \\
\cref{thm:tolerance} Encoding tolerance & \texttt{comp\_perturbation} \\
\cref{thm:readout} Behavioral bound   & \texttt{readout\_shift\_le} \\
\bottomrule
\end{tabular}
\caption{Paper-to-Lean correspondence (axiom-clean: \texttt{propext},
\texttt{Classical.choice}, \texttt{Quot.sound} only).}
\label{tab:lean}
\end{table}

\serveguardartifactstatement

\paragraph{Cross-layer cascades under an every-layer defense.} The coverage bound of
\cref{sec:scope} governs cross-layer attacks under a strong premise: if \emph{every}
layer's value path is confined to its own monitor's visible channel, then a coordinated
cancellation cascade (whose per-layer components each lie in that layer's blind
subspace) is inert layer by layer, and the stacked defense closes it by the same
rank--nullity argument, not a new mechanism; a component visible at some layer is outside
the operator-invisible class by definition, and nonlinear propagation between layers is not
modeled beyond this. A cascade against a partially confined stack,
in which some layer is left unmonitored, is outside the certificate, exactly as a single
unconfined layer is.

\begin{figure*}[t]
\centering
\ifdefined\arxivversion\begin{adjustbox}{max width=\linewidth}\fi
\begin{tikzpicture}
\begin{axis}[
  name=leftplot,
  width=0.46\textwidth, height=0.255\textwidth,
  title={\footnotesize\textbf{(a)} Confinement cost},
  xlabel={blind fraction $\dim\ker\pip/\dmodel$},
  ylabel={disruption},
  xmin=0, xmax=1, ymin=0,
  cycle list name=oi8, tick label style={font=\footnotesize},
  label style={font=\footnotesize},
]
\addplot table[col sep=comma, x=blind_frac, y=mean_disruption] {data/csv/channel_capacity_gpt2.csv};
\addplot table[col sep=comma, x=blind_frac, y=mean_disruption] {data/csv/channel_capacity_smollm17.csv};
\addplot table[col sep=comma, x=blind_frac, y=mean_disruption] {data/csv/channel_capacity_qwen05.csv};
\addplot table[col sep=comma, x=blind_frac, y=mean_disruption] {data/csv/channel_capacity_qwen7.csv};
\end{axis}
\begin{axis}[
  at={(leftplot.east)}, anchor=west, xshift=12mm,
  width=0.46\textwidth, height=0.255\textwidth,
  title={\footnotesize\textbf{(b)} Augmentation frontier},
  xlabel={augmentation fraction $q/b$},
  ylabel={$\sigma_{q+1}/\sigma_1$},
  xmin=0, xmax=1, ymin=0, ymax=1,
  legend to name=sharedlegend,
  legend columns=4, legend cell align=left,
  legend style={font=\footnotesize, draw=none, /tikz/every even column/.append style={column sep=6pt}},
  cycle list name=oi8, tick label style={font=\footnotesize},
  label style={font=\footnotesize},
]
\addplot table[col sep=comma, x=qfrac, y=rho_med] {data/csv/frontier_median_gpt2.csv};
\addlegendentry{GPT-2 (MHA)}
\addplot table[col sep=comma, x=qfrac, y=rho_med] {data/csv/frontier_median_smollm17.csv};
\addlegendentry{SmolLM2-1.7B (MHA)}
\addplot table[col sep=comma, x=qfrac, y=rho_med] {data/csv/frontier_median_qwen05.csv};
\addlegendentry{Qwen2.5-0.5B (GQA)}
\addplot table[col sep=comma, x=qfrac, y=rho_med] {data/csv/frontier_median_qwen7.csv};
\addlegendentry{Qwen2.5-7B (GQA)}
\addplot table[col sep=comma, x=qfrac, y=rho_med] {data/csv/frontier_median_qwen2515b.csv};
\addlegendentry{Qwen2.5-1.5B (GQA)}
\addplot table[col sep=comma, x=qfrac, y=rho_med] {data/csv/frontier_median_qwen253b.csv};
\addlegendentry{Qwen2.5-3B (GQA)}
\addplot table[col sep=comma, x=qfrac, y=rho_med] {data/csv/frontier_median_llama321b.csv};
\addlegendentry{Llama-3.2-1B (GQA)}
\addplot table[col sep=comma, x=qfrac, y=rho_med] {data/csv/frontier_median_llama323b.csv};
\addlegendentry{Llama-3.2-3B (GQA)}
\end{axis}
\node[below=6mm] at (current bounding box.south) {\ref{sharedlegend}};
\end{tikzpicture}
\ifdefined\arxivversion\end{adjustbox}\fi
\caption{\textbf{(a)} QK-monitor confinement cost grows with blind fraction, across the four
checkpoints with capacity measurements. \textbf{(b)} Ideal-design augmentation frontier
$\sigma_{q+1}/\sigma_1$ (median
over all layers and heads, \cref{thm:augment}). All eight checkpoints are plotted;
the four in \textbf{(a)} keep their colours. The six GQA paths (two families, $Q\!:\!KV$ from
$3\!:\!1$ to $8\!:\!1$) fall to zero at their own value-path rank $n_{kv}d_{\mathrm{head}}$, so
they collapse at different fractions of $b$; the MHA paths (GPT-2, SmolLM2) decay only near
full coverage.}
\label{fig:channel_capacity}
\label{fig:frontier}
\end{figure*}

\section{The proof-carrying-adapter protocol and trust boundary}
\label{sec:contract}

We state the deployed system as a named protocol so downstream work cites a fixed interface.
Over a public base \emph{checkpoint} $W_0$ with value map $\Valbase$:
\begin{description}
\item[\normalfont$\mathsf{CompileMonitor}(W_0,\mathsf{policy})\to(\Mdep,\mu)$]
  deterministically builds the finite-precision monitor $\Mdep$ from the full base checkpoint
  (the QK map plus its optimal in-kernel augmentation, \cref{sec:defense}) and an authenticated
  \emph{manifest} $\mu$ (base hash, model/layer/head, policy, quantization, the compiler
  digest pinning the SVD implementation with its ordering, sign, and rounding conventions,
  monitor digest $h_{\Mdep}$, verification-key digest, and the public residual floor).
\item[\normalfont$\mathsf{TypeTrain}(\Mdep,\mathsf{data})\to(B,A,C)$] trains the adapter
  monitor-typed, $A=C\Mdep$, so $\Valupd=BA$ is inert for every $B$ (\cref{thm:typed}).
\item[\normalfont$\mathsf{CommitCert}(A,C)\to\commit$] one salted commitment
  $\commit=\mathrm{Com}(A,C;\rho)$ that the certificate proof opens, hiding $A$ and $C$ from a
  verifier who never sees the certified read factor, fixed before the challenge is derived. We bind this
  \emph{same} $\commit$ into the package (below), so there is no second, unlinked shipped-factor
  commitment; committing the read factor \emph{alone}, which would let a consumer open with a
  read-only witness, is not expressible under the stock toolchain's global commitment
  visibility, so opening $\commit$ uses $(C,\rho)$, which $\mathsf{Package}$ delivers to the
  weight-holding consumer as a private serving sidecar; the remote verifier receives neither.
\item[\normalfont$\mathsf{Package}(B,A,\mu,\commit,C,\rho)\to(h_{\mathrm{pkg}},\mathcal P)$]
  serializes the shipped package with root $h_{\mathrm{pkg}}=H\bigl(\mu,\,
  (\ell,\mathsf{tensorID}_\ell,\mathsf{shape}_\ell,H(B_\ell),\commit_\ell)_{\ell=1}^{L}\bigr)$,
  binding the shipped write factor and the \emph{proof's own} commitment per layer;
  $\mathcal P$ carries the checkpoint $(B,A)$ and, to the serving consumer only, the
  sidecar $(C,\rho)$.
\item[\normalfont$\mathsf{Prove}(\mathsf{st};A,C,\rho)\to\pi$] the certificate proof of the typed
  relation $A=C\Mdep$, opening $\commit$ (\cref{sec:realization}).
\item[\normalfont$\mathsf{Verify}(\mathsf{st},\pi)\to\{0,1\}$] one layer's proof, against
  $\mathsf{st}=(h_{\mathrm{base}},h_{\mathrm{pkg}},\mu,\Mdep\text{ or its digest},\commit,
  \dmodel,\dhead,r,\ldots)$; $\mathsf{VerifyBundle}$ runs $\mathsf{Verify}$ on all $L$ layers and
  recomputes $h_{\mathrm{pkg}}$.
\item[\normalfont$\mathsf{ServeGuard}(W_0,\mathsf{policy},\commit,\tilde A,C,\rho)\to\{0,1\}\times\text{tensor}$]
  the serving-time guard, run by the weight-holding consumer on the served bytes and the
  private sidecar $(C,\rho)$: (i) it parses $\tilde A$ as a \emph{canonical lattice tensor},
  rejecting any $\tilde A\neq\tilde A_Z\,2^{-2F}$ for
  $\tilde A_Z=\mathrm{round}(\tilde A\,2^{2F})$ (a float near a lattice point but not on it
  is refused, so no off-lattice perturbation, however small, is admitted for the
  unrestricted write factor to amplify); (ii) it verifies the commitment opening
  $\commit=\mathrm{Com}(\tilde A,C;\rho)$, binding the served bytes to the \emph{certified}
  factor; (iii) it checks the typed identity exactly over the integer lattice,
  $\tilde A_Z=C_Z(\Mdep)_Z$ with $C_Z=\mathrm{round}(C\,2^{F})$ and
  $(\Mdep)_Z=\mathrm{round}(\Mdep\,2^{F})$ (the honest adapter is built as $A=C\Mdep$ on
  the lattice, so this is an integer equality), where the monitor is \emph{recomputed} from
  the authenticated base by $\mathsf{CompileMonitor}$, not a shipped matrix (closing the
  deficient-monitor substitution); and (iv) it returns the canonical decoded tensor
  $\tilde A_Z\,2^{-2F}$, the tensor inference must execute (a serving stack that re-casts
  it to a floating format re-enters the executed-tensor frontier, \cref{sec:scope}).
  Given the opening, the admission check is \emph{exact and unconditional}: no prime, no
  field check, no acceptance threshold, hence no slack that an unbounded write factor $B$
  could amplify (\cref{thm:typed}). Exactness is a theorem about the canonical weight
  relation; floating-point evaluation error inside the inference matmul is outside it
  unless the kernel implements the fixed-point semantics or a separate numerical-error
  bound is applied. Checking $\rank$ over a fixed
  prime field instead would be unsound, since a congruent $\tilde A_Z+pE$ shares the $\mathrm{GF}(p)$ rank yet carries
  a live channel over the integers; the integer identity above rejects it. The guard reads
  only the public monitor, the committed read factor, and the served weights, and a defensive range
  bound on $\tilde A_Z$ rejects wild factors (exactness does not depend on it).
\end{description}

The guarantee composes three levels. \emph{Algebraic:} for every accepted layer, $A=C\Mdep
\Rightarrow \forall\delta\in\ker\Mdep,\ BA\delta=0$ (\cref{thm:typed}). \emph{Served-checkpoint:}
with $\Valsrv=\Valbase+BA$, $\forall\delta\in\ker\Mdep,\ \Valsrv\delta=\Valbase\delta$
(\cref{lem:served-floor}): the publisher adds no blind-subspace channel and the residual is the
authenticated base floor. \emph{Computational, in two parts.} \emph{Typed-relation soundness}
(\cref{thm:protocol}): except with $\eps_{\mathrm{bind}}+\eps_{\mathrm{KS}}+q_H\lvert
S\rvert^{-\zkProbes}$, the read factor committed under $\commit$ is monitor-typed, so the
update it forms with any write factor is inert, a privacy-preserving attestation of the committed adapter, checkable without the read factor.

\paragraph{No-wrap ledger.} The circuit input is $m'=(m,0)$: the appended coordinate is a
\emph{fixed public zero} multiplying the committed salt column of $C$, so the salt enters
the commitment preimage but never the identity, and the checked relation stays the
homogeneous $A_Z\,r = C_Z\,m$. The verifier computes $m=(\Mdep)_Z\,r$ canonically in
exact integer arithmetic outside the circuit (no wrap can occur there) and it enters as a
range-checked public input. Every in-circuit cell is bounded by the lookup decomposition
(base $2^{14}$, two legs): magnitude below $2^{\nwCellBits}$, a circuit \emph{constraint}
holding for any witness, not an observed maximum, at the maximum width configured over the
twelve generated layer circuits:
\begin{center}\small
\begin{tabular}{@{}lrr@{}}
\toprule
\textbf{Intermediate} & \textbf{terms} & \textbf{enforced bound (bits)} \\
\midrule
$A_Z\,r$ & $\dmodel$ & $\nwBoundAr$ \\
$C_Z\,m'$ & $s{+}1$ & $\nwBoundCm$ \\
$A_Z\,r - C_Z\,m'$ & -- & $\nwBoundDiff$ \\
\bottomrule
\end{tabular}\end{center}
Minimum margin: $\nwMarginWorst$ bits below $p/2$ (BN254).
\emph{Served-integrity}: $\mathsf{ServeGuard}$ admits $\tilde A_\ell$ only if
$\tilde A_\ell=C_\ell\Mdep[\ell]$ for the committed $C_\ell$, an exact integer check that both types
the served factor and binds it to the commitment, so the \emph{served} update $B\tilde A_\ell$ is
confined for the served $B$ whatever an intermediary shipped. The composition is the system's contract.

The contract this interface supports, \cref{thm:endtoend}, is stated and proved in
\cref{sec:systemcontract}.

The probabilistic term in \cref{thm:protocol} is the following all-probes bound on the
probed identity of \cref{sec:realization}.

\begin{lemma}[All-probes soundness]
\label{lem:allprobes}
Let $E := A - C\Mdep$ be the residual on the encoding lattice, fixed by the commitment
$\commit$ and the public statement $\mathsf{st}$, and suppose $E \neq 0$. If the probe
vectors $r_1,\dots,r_{\zkProbes}$ have coordinates drawn independently and uniformly from
the bounded integer range $S$ after $\commit$ is fixed, and the no-wrap range checks hold,
then $\Pr[\,E r_j = 0 \text{ for all } j\,] \le \lvert S\rvert^{-\zkProbes}$, and a prover
making $q_H$ random-oracle queries to steer the probes succeeds with probability at most
$q_H\,\lvert S\rvert^{-\zkProbes}$.
\end{lemma}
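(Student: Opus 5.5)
The plan is the standard Schwartz--Zippel/Freivalds argument over the integers, plus a union bound for the random-oracle steering. First, use the no-wrap range checks to move from $\Fp$ to $\mathbb{Z}$. Every intermediate of $A_Z r - C_Z m'$ is constrained to lie strictly inside $(-p/2,p/2)$, per the ledger of \cref{sec:contract}. So a field zero of $E r_j$ is an integer zero, and it suffices to bound $\Pr[E r_j = 0]$ for the integer matrix $E\neq 0$, which is fixed before the probes are drawn.

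For one probe, pick a row $e^{\top}$ of $E$ with some nonzero coordinate $e_k$. Condition on all coordinates of $r_j$ except the $k$-th. The event $e^{\top} r_j = 0$ then becomes the linear equation $e_k\,(r_j)_k = -\sum_{i\neq k} e_i (r_j)_i$ in the single unknown $(r_j)_k$. Over $\mathbb{Z}$ (or over $\mathbb{Q}$) this has at most one solution. The coordinate $(r_j)_k$ is uniform on $S$ and independent of the others, so the conditional probability is at most $1/|S|$. Averaging over the conditioning gives $\Pr[E r_j = 0]\le \Pr[e^{\top} r_j=0]\le 1/|S|$.

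The $\zkProbes$ probes are independent, so the probability that all of them vanish is at most $|S|^{-\zkProbes}$. Here it matters that each $r_j$ is expanded from its own domain-separated seed and rejection-sampled to be exactly uniform on $S$. That is what justifies independence and uniformity in the random-oracle model.

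For the grinding term, model $H$ as a random oracle. A prover that wants $E r_j=0$ for all $j$ must fix $(\mathsf{st},\commit)$, and hence $E$ by binding, before querying $H$. Each distinct query point $(\mathsf{st},\commit)$ yields a fresh independent tuple $(r_1,\dots,r_{\zkProbes})$. For each one where the bound $E$ is nonzero, the success probability is at most $|S|^{-\zkProbes}$ by the previous step. A union bound over at most $q_H$ query points gives $q_H|S|^{-\zkProbes}$.

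The main obstacle is making rigorous that $E$ is determined by $(\commit,\mathsf{st})$ before the challenge is derived. A prover could try to open the same $\commit$ to a different $(A,C)$ after seeing the probes. This is exactly where the $\eps_{\mathrm{bind}}$ term of \cref{thm:protocol} enters; the lemma itself assumes $E$ is fixed by $\commit$. I would therefore isolate binding failure as a separate event and prove the lemma conditioned on its absence. The remaining subtlety is that rejection sampling gives exact uniformity only under the random-oracle idealization, and I would state that explicitly.
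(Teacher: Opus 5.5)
Your proposal is correct and follows the paper's proof essentially step for step: the range checks turn a field zero into an integer zero, a nonzero row of $E$ conditioned on all but one coordinate gives the per-probe bound $1/\lvert S\rvert$, independence gives $\lvert S\rvert^{-\zkProbes}$, and a union bound over the prover's $q_H$ oracle queries (which the paper phrases as salt choices) gives the grinding term. Your explicit treatment of binding failure as a separate event, and of exact uniformity from rejection sampling as a random-oracle idealization, only spells out what the paper's proof leaves implicit.
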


\begin{proof}
Fix a nonzero row $e$ of $E$. Conditioning on all coordinates of $r_j$ except one at which
$e$ is nonzero, at most one value of the remaining coordinate zeroes $e \cdot r_j$, so
$\Pr[e \cdot r_j = 0] \le 1/\lvert S\rvert$; independence across $j$ gives
$\lvert S\rvert^{-\zkProbes}$. The union bound over the $q_H$ salt choices gives the
grinding term. The range checks make a field zero an integer zero, so the field identity
checked in-circuit implies the integer one.
\end{proof}

\paragraph{Overheads.} The typing is a training-time parameterization, not a serving-time
cost: $\mathsf{TypeTrain}$ constrains the read factor to $A=C\Mdep$ during training, and the
shipped $(B,A)$ is an \emph{ordinary} LoRA once materialized, so \emph{inference overhead is
zero} (the served model runs the standard adapter). The added artifacts are small and
one-time: the read factor $A$ is $r\times\dmodel$ (a rank-$8$ GPT-2 value adapter is
$\gptLoraSecretRankEight$ secret entries), the manifest $\mu$ is a handful of hashes and
scalars, and the whole-checkpoint proof bundle with its package root is
$\pkgRootProofKB$\,KB, verified in $\pkgCommitVerifyS$\,s (the root itself
$\pkgRootCostMs$\,ms). Prover time is the only material cost and is one-time at publication
($\pkgCommitProveS$\,s for a GPT-2 checkpoint).

\begin{table}[h]
\centering\small
\caption{Trust boundary of the deployed certificate.}
\label{tab:trust}
\begin{tabular}{@{}ll@{}}
\toprule
Object / party & Status \\
\midrule
Adapter publisher and prover & untrusted \\
Read factor $A$, witness $C$ & secret, adversarial \\
Write factor $B$ & shipped in the clear, hash-bound \\
Public base hash, monitor policy & trusted public statement \\
Monitor compiler $\mathsf{CompileMonitor}$ & deterministic, reproducible \\
SNARK, commitment, SRS & cryptographic assumptions \\
Artifact distribution & bound by $h_{\mathrm{pkg}}$ (write factor $+$ the proof's commitment) \\
Distribution intermediary (may swap $A$) & untrusted; defeated by $\mathsf{ServeGuard}$ \\
Served read factor $\tilde A$ & admitted only if $\mathsf{ServeGuard}$ passes (public monitor) \\
Serving endpoint & runs $\mathsf{ServeGuard}$; full behavior out of scope absent TEE \\
Served model behavior & not certified \\
\bottomrule
\end{tabular}
\end{table}

\section{Knowledge boundary and parameter-efficiency corollary}
\label{sec:gapcorollary}

\noindent The certificate excludes blind actuation; it does not constrain visible-channel
knowledge or visible writes, and no predicate keyed to a fixed public carrier decides what
rides directions outside that carrier. The demonstration below makes this boundary concrete
on GPT-2; the general classification of what cheap certificates can and cannot decide is
outside this paper.

\paragraph{Boundary demonstration: knowledge without blind actuation.} The
knowledge-vs-actuation boundary of \cref{sec:scope} is concrete on real GPT-2: the rank-1
typed adapter $A=c^{\top}\Mon$ (with $c$ the monitor-visible contrast of a trigger prompt)
passes both the certificate and $\mathsf{ServeGuard}$ and reads nothing from $\ker\Mon$
(machine zero), yet its output-logit shift is $\knowChanSelectivity\times$ larger on the
trigger than on controls. The certificate behaves exactly as specified (it verifies a
genuinely typed adapter); the knowledge-conditioned write rides the \emph{visible}
directions: the construction is \emph{inside} the certified class but outside the
blind-actuation class the certificate excludes, which is why a knowledge-bounding
or behavioral monitor is the complementary defense.

\paragraph{Parameter-efficiency corollary.} Confining the value
path to a $k$-dimensional visible channel caps the adapter's rank at $k$, so it admits a
smaller $r\times k$ read parameterization in place of $r\times\dmodel$. Empirically a rank-%
$\poneCheapRankLo$ adapter confined to a quarter-width channel ($\poneCheapParamRatio\times$
fewer parameters than the unconfined baseline) still recovers $\poneCheapRecoveryLo\%$ of
the task gain.

\section{Zero-product form and auxiliary theorems}
\label{sec:zeroproduct}

\paragraph{Optimal in-kernel augmentation (\cref{thm:augment}).}
\begin{proof}[Proof idea]
$\Valbase P_{\ker\pip} = (\Valbase\Nbasis_\pip)\Nbasis_\pip^{\top}$; monitoring $T$ removes the
corresponding $q$ domain directions of $\Valbase\Nbasis_\pip$, and by the Courant--Fischer
min--max characterization the operator norm of the restriction to the remaining $(b{-}q)$
directions is minimized by dropping its top-$q$ right-singular directions, leaving its
$(q{+}1)$-th singular value.
\end{proof}

\paragraph{Ideal-monitor tolerance.} A deployer who instead wishes to
name an \emph{ideal} real-valued monitor $\Mon$ and serve its fixed-point compilation
$\Mdep=\Mon_{\mathrm{fp}}$ pays an explicit, optional tolerance: for any real payload
$\delta\in\ker\Mon$,
\begin{equation}
\Valupd\,\delta = BC(\Mon_{\mathrm{fp}}-\Mon)\,\delta, \quad
\lVert \Valupd\,\delta\rVert \le \lVert BC\rVert\,\lVert \Mon_{\mathrm{fp}}-\Mon\rVert\,
\lVert\delta\rVert,
\label{eq:typedtol}
\end{equation}
since $\Mon\delta = 0$. Here $\lVert\Mon_{\mathrm{fp}}-\Mon\rVert$ is the public encoding error of the
compiled monitor at the stated fixed-point resolution, and $\lVert BC\rVert \le \lVert
B\rVert\,\lVert C\rVert$ is bounded by the range-checked witness $C$ and the norm of the
\emph{publicly shipped, package-bound} write factor $B$ (\cref{sec:realization}), not by any
in-circuit constraint on $B$. This ideal-monitor reading is a modeling choice a deployer may
state; the deployed guarantee against $\Mdep$ needs none of it.

\begin{lemma}[Zero-product certificate]
\label{lem:zeroproduct}
If $\img\Nbasis = \ker\Mon$ then $\Inert(\Mon,\Val) \Leftrightarrow \Val\circ\Nbasis = 0$.
If only $\ker\Mon\subseteq\img\Nbasis$, then $\Val\circ\Nbasis=0$ implies
$\Inert(\Mon,\Val)$.
\end{lemma}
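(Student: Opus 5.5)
The plan is to prove both implications by pure linear algebra, unfolding $\Inert(\Mon,\Val)$ as $\ker\Mon\subseteq\ker\Val$ via \eqref{eq:inert}.

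The basic observation is a chain of equivalences. Saying $\Val\circ\Nbasis=0$ means $\Val\Nbasis c=0$ for every coefficient vector $c$. That is the same as $\Val$ vanishing on every element of $\img\Nbasis$, i.e.\ $\img\Nbasis\subseteq\ker\Val$. So the whole lemma reduces to comparing the two subspaces $\img\Nbasis$ and $\ker\Mon$ as subsets that must land in $\ker\Val$.

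\emph{Second claim (one-sided hypothesis).} Assume only $\ker\Mon\subseteq\img\Nbasis$ and that $\Val\circ\Nbasis=0$. By the observation, $\img\Nbasis\subseteq\ker\Val$. Chaining the two inclusions gives
\[
\ker\Mon\subseteq\img\Nbasis\subseteq\ker\Val,
\]
which is exactly $\Inert(\Mon,\Val)$.

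\emph{First claim (equality $\img\Nbasis=\ker\Mon$).} The direction ``$\Leftarrow$'' is the special case just proved. For ``$\Rightarrow$'', inertness gives $\ker\Mon\subseteq\ker\Val$. Rewriting with the hypothesis yields $\img\Nbasis\subseteq\ker\Val$, hence $\Val\Nbasis c=0$ for all $c$, i.e.\ $\Val\circ\Nbasis=0$.

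There is no real obstacle here. The only point that needs care is the remark on which inclusion each direction uses. The ``$\Rightarrow$'' direction uses $\img\Nbasis\subseteq\ker\Mon$, the part that fails for a basis larger than the kernel. The ``$\Leftarrow$'' direction uses the spanning inclusion $\ker\Mon\subseteq\img\Nbasis$, whose failure is precisely the deficient-basis attack of \cref{thm:spanning}.

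All maps are over an arbitrary field, matching the Lean statement \texttt{inert\_iff\_zero\_product}. Nothing beyond set inclusion and linearity is needed; in particular no finite-dimensionality or rank argument enters.
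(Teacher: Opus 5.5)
Your proof is correct and matches the paper's argument. The paper writes out no prose proof and relies on the mechanized declaration \texttt{inert\_iff\_zero\_product}, which amounts to exactly your unfolding: compare $\Val\circ\Nbasis=0\iff\img\Nbasis\subseteq\ker\Val$ against $\ker\Mon\subseteq\ker\Val$, with the spanning inclusion $\ker\Mon\subseteq\img\Nbasis$ being the one \cref{thm:spanning} shows is load-bearing.
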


\begin{theorem}[Deficient basis: a counterexample]
\label{thm:spanning}
If $\img\Nbasis \subsetneq \ker\Mon$ (strict), there is a value map $\Val$ with
$\Val\circ\Nbasis = 0$ yet $\neg\,\Inert(\Mon,\Val)$: a deficient basis admits a
non-inert map that passes the certificate. The verifier must check
$\Mon\Nbasis = 0$ \emph{and} $\rank\Nbasis = \dmodel-\rank\Mon$ in the clear.
\end{theorem}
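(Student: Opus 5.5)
The plan is an explicit construction. Since $\img\Nbasis \subsetneq \ker\Mon$ and both are subspaces of the finite-dimensional $X=\Real^{\dmodel}$, I first pick a vector $u\in\ker\Mon\setminus\img\Nbasis$. Next I choose a linear functional $\phi$ on $X$ with $\phi(\img\Nbasis)=0$ and $\phi(u)=1$. Concretely, I would take the orthogonal projection $u'$ of $u$ onto $(\img\Nbasis)^{\perp}$, which is nonzero precisely because $u\notin\img\Nbasis$, and set $\phi(x)=\langle u',x\rangle/\lVert u'\rVert^2$. Then I fix any nonzero output vector $w$ in the value space and define $\Val := w\,\phi$, a rank-one map.

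The verification has two checks. First, every column of $\Nbasis$ lies in $\img\Nbasis\perp u'$, so $\Val\circ\Nbasis=0$. Second, $\Mon u=0$ while $\Val u = w\,\phi(u) = w\neq 0$. Therefore $u$ is a concealed-and-active payload, and by \cref{thm:inertness}(a) the pair $(\Mon,\Val)$ is not inert. This settles the existence claim. Over a general field I would replace the orthogonal projection with a complement argument: extend a basis of $\img\Nbasis$ by $u$ and then to a basis of $X$, and let $\phi$ be the coordinate functional of $u$. This matches the field-agnostic Lean statement \texttt{spanning\_necessary}.

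For the verifier obligation, I would argue that the two clear checks together force $\img\Nbasis=\ker\Mon$. The condition $\Mon\Nbasis=0$ gives $\img\Nbasis\subseteq\ker\Mon$. By rank--nullity, $\dim\ker\Mon=\dmodel-\rank\Mon$, so the condition $\rank\Nbasis=\dmodel-\rank\Mon$ makes the dimensions equal, and a subspace of equal finite dimension is the whole space. \Cref{lem:zeroproduct} then makes $\Val\circ\Nbasis=0$ equivalent to inertness. Each check is also necessary on its own. The first part of this theorem shows that dropping the rank check admits a counterexample. Dropping $\Mon\Nbasis=0$ lets a basis of the right dimension lie partly outside $\ker\Mon$, so it need not contain $\ker\Mon$; in that case the $\Leftarrow$ direction of \cref{lem:zeroproduct} fails and we return to a strict-deficiency situation.

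No step is deep. The only point needing care is nondegeneracy: the value space must be nonzero so that $w$ exists, and the separating functional must vanish on all of $\img\Nbasis$ rather than merely on the listed columns. Both hold because the columns span $\img\Nbasis$ and the value space is nontrivial.
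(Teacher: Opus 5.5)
Your construction is correct and matches the paper's argument. The paper justifies \cref{thm:spanning} only by the adversary who ``ships a map reading an omitted direction'' (\cref{sec:certificate}), mechanized as \texttt{spanning\_necessary}; that is your rank-one map $w\,\phi$ with $\phi$ vanishing on $\img\Nbasis$ and $\phi(u)=1$ for some $u\in\ker\Mon\setminus\img\Nbasis$, and your rank--nullity argument for the two clear checks is the reasoning the paper gives in prose. One precision: when you drop $\Mon\Nbasis=0$, the situation is $\ker\Mon\not\subseteq\img\Nbasis$ rather than a literal strict inclusion, but your construction only uses $u\in\ker\Mon\setminus\img\Nbasis$, so it applies there verbatim.
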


\noindent Training can enforce the typed form $A = C\Mon$ by parameterization, or penalize
the read factor's blind component directly as a regularizer; an exact typed adapter has
zero blind component.

\section{Predicate family and harvested certificates}
\label{sec:predicates}

\begin{table}[h]
\centering
\small
\begin{tabular}{@{}p{2.3cm}p{5.0cm}p{4.6cm}@{}}
\toprule
\textbf{Predicate} & \textbf{Statement} & \textbf{Role} \\
\midrule
\textbf{Inertness} & $\ker\Mon \subseteq \ker\Val$, certified by $A = C\Mon$ (or $\Val\,\Nbasis = 0$) & the defense; absence of the OI channel \\
\textbf{Confinement} & $\Inert(\Mon, \Val\circ\Pivis)$ for any $\Val$ & inertness by construction \\
\textbf{Joint coverage} & $\Inert(\Mon_1\!\times\!\Mon_2, \Val)$ & cheap dual \\
\textbf{Behavioral bound} & $\lVert G\Val\delta\rVert \le \lVert G\rVert\,\eps\,\lVert\delta\rVert$ for $\delta\in\ker\Mon$ & $\eps$-inertness bounded shift \\
\textbf{Named trigger} & $T \subseteq \ker\Val$, certified by $\Val\,\Nbasis=0$ with $\img\Nbasis=T$ & fixed public trigger \\
\bottomrule
\end{tabular}
\caption{The defense's predicate family (\cref{sec:defense}). The exact predicates arise
from public homogeneous identities; the behavioral row is a norm consequence of inertness.}
\label{tab:predicates}
\end{table}

\paragraph{Cheap dual (coverage).} Inertness against the joint monitor $\Mon_1\times\Mon_2$
is inertness against $\ker\Mon_1\cap\ker\Mon_2$, which is publicly computable when both
monitors are public; this certifies the dual property (closing two monitors' shared blind
spot) by the same cheap zero-product, with no rank lower bound. A budget on a \emph{secret}
monitor is a rank lower bound, the hard direction, and is out of scope (\cref{sec:scope}).

\paragraph{Harvested proofs.} From garak's glitch probe: \garakGlitchTokens\ single-token
triggers (rank-\garakGlitchRank\ subspace), residual $\garakGlitchCleanResid$ (accept) vs
$\garakGlitchBackdoorResid$ (reject), $\garakGlitchSepOrders$ orders apart. Injection split:
$\garakInjBlindFrac$ concealable in $\ker\pip$, $\garakInjSepOrders$ orders separation.
Architecture-independent: $\qwenInjSepOrders$ (Qwen), $\smollmInjSepOrders$ (SmolLM2).
Int8 quantization keeps the tolerance verdict separated ($\quantIntEightHeadroom$ orders
below the backdoor); int4 erodes the margin to $\quantIntFourMarginLo\times$.

\paragraph{Behavioral validation.} The rejected adapter shifts glitch-token logits
$\behavShiftRatio\times$ more than the confined one ($\behavRatioMean\pm\behavRatioStd\times$
over $\behavSeeds$ draws); the confined adapter's shift stays within the
\cref{thm:readout} bound. The certificate prevents an adapter from weaponizing the glitch
subspace; it does not fix the base model (which fails $\garakBaseGlitchFail\%$ of
$\garakBaseGlitchProbes$ prompts on its own).

\section*{AI use statement}
Large language model tooling was used, under the authors' direction and review, to assist with the
Lean~4 formalization of the soundness theorems (every cited Lean declaration is sorry-free and
axiom-clean; results or clauses carried only by paper proofs are identified explicitly in
Appendix~A; the human authors are responsible for the statements and their correspondence to the
paper), with implementation
of the zero-knowledge prover integration and the measurement harness, and with drafting and editing
of the manuscript. All claims, proofs, and reported measurements were verified by the authors, who
take full responsibility for the paper's content. No generative model was used to produce or
fabricate experimental data.

\bibliographystyle{plainnat}
\bibliography{refs}

\end{document}